\newcommand{\ignore}[1]{}
\newcommand{\C}{\mathcal{C}}
\newcommand{\Cvas}{\C_{\text{\tiny SEC-VAS}}}
\newcommand{\F}{\mathcal{F}}
\newcommand{\G}{\mathcal{G}}
\newcommand{\A}{\mathcal{A}}
\newcommand{\B}{\mathcal{B}}
\newcommand{\N}{\mathbb{N}}
\newcommand{\Z}{\mathbb{Z}}
\newcommand{\powset}[1]{{\cal P}(#1)}
\newcommand{\fin}{\textsc{fin}}
\newcommand{\parim}{\Pi}
\newcommand{\emptyword}{\varepsilon}
\newcommand{\skel}{\textsc{skel}}
\newcommand{\myparagraph}[1]{\vskip 0.3cm \noindent\textbf{#1.}}
\newcommand{\sepsep}{ }
\newcommand{\vas}{VAS\xspace}
\newcommand{\vases}{{\vas}es\xspace}
\newcommand{\vasreach}{\vas reachability\xspace}
\newcommand{\vaslangs}{\vas languages\xspace}
\newcommand{\vass}{VASS\xspace}
\newcommand{\vasses}{{\vass}es\xspace}
\newcommand{\set}[1]{\{#1\}}
\newcommand{\setof}[2]{\set{#1 \mid #2}}
\newcommand{\mycap}{\, \cap \,}
\begin{document}

\title{Regular Separability of Parikh Automata} 

\author{Lorenzo Clemente\inst{1}
\and Wojciech Czerwi\'nski\inst{1}
\and S{\l}awomir Lasota\inst{1}
\and Charles Paperman\inst{2}}
\institute{University of Warsaw \and University of T{\"u}bingen}

\maketitle

\begin{abstract}
We investigate a subclass of languages recognized by vector addition systems, namely languages of 
nondeterministic Parikh automata. While the regularity problem (is the language of a given automaton regular?) 
is undecidable for this model, we show surprising decidability of the regular separability problem:
given two Parikh automata, is there a regular language that contains one of them and is disjoint from the other?
%
\end{abstract}


\section{Introduction}
In this paper we investigate separability problems for languages of finite words.
We say that a language $U$ is \emph{separated from} a language $V$ by $S$ if $U \subseteq S$ and $V \cap S = \emptyset$.
In the sequel we also often say that $U$ \emph{and} $V$ are \emph{separated by} $S$.
For two families of languages $\F$ and $\G$, the \emph{$\F${\sepsep}separability problem for $\G$}
asks, given two given languages $U, V \in \G$, whether $U$ is separated from $V$ by some language from $\F$.
The same notion of separability makes clearly sense if $\F$ and $\G$ are classes of sets of vectors instead of classes of languages.

Concretely, in this paper we mainly consider $\F$ to be regular languages, and $\G$ to be the languages of Parikh automata;
or $\F$ the unary sets, and $\G$ the semilinear sets.

\myparagraph{Motivation}
Separability is a classical problem in theoretical computer science.
It was investigated most extensively in the area of formal languages, for $\G$ being the family of all regular word languages.
Since regular languages are effectively closed under complement, the $\F$ separability problem
is a generalization of the $\F$ characterization problem, which asks whether a given language belongs to $\F$.
Indeed, $L \in \F$ if and only if $L$ is separated from its complement by some language from $\F$.
Separability problems for regular languages attracted recently a lot of attention,
which resulted in establishing the decidability of $\F$ separability for the family $\F$ of separators being
the piecewise testable languages~\cite{DBLP:conf/icalp/CzerwinskiMM13,DBLP:conf/mfcs/PlaceRZ13}
(recently generalized to finite ranked trees \cite{DBLP:conf/icalp/Goubault-Larrecq16}),
the locally and locally threshold testable languages~\cite{DBLP:conf/fsttcs/PlaceRZ13},
the languages definable in first order logic~\cite{DBLP:journals/corr/PlaceZ14},
and the languages of certain higher levels of the first order hierarchy~\cite{DBLP:conf/icalp/PlaceZ14},
among others.

Separability of nonregular languages attracted little attention till now.
The reasons for this may be twofold. First, for regular languages one can use standard algebraic tools, like syntactic monoids,
and indeed most of the results have been obtained with the help of such techniques.
Second, some strong intractability results have been known already since 70's, when Szymanski and Williams proved
that regular{\sepsep}separability of context-free languages is undecidable~\cite{DBLP:journals/siamcomp/SzymanskiW76}.
Later Hunt~\cite{DBLP:journals/jacm/Hunt82a} generalized this result: he showed that $\F$-separability of context-free languages
is undecidable for every class $\F$ which is closed under finite boolean combinations
and contains all languages of the form $w\Sigma^*$ for $w \in \Sigma^*$. This is a very weak condition, so it seemed
that nothing nontrivial can be done outside regular languages with respect to separability problems.
Furthermore, Kopczy\'{n}ski has recently shown that regular{\sepsep}separability is undecidable
even for languages of visibly pushdown automata~\cite{DBLP:journals/corr/Kopczynski15a},
thus strengthening the result by Szymanski and Williams.
On the positive side, piecewise testable{\sepsep}separability has been shown decidable
for context-free languages, languages of vector addition systems (\vaslangs),
and some other classes of languages~\cite{DBLP:conf/fct/CzerwinskiMRZ15}.
This inspired us to start a quest for decidable cases beyond regular languages.

In~\cite{DBLP:journals/corr/ClementeCLP16} we have shown decidability of
\emph{unary separability} of reachability sets of vector addition systems (\vases).
By \emph{unary sets} we mean Parikh images of commutative regular languages,
and thus the latter problem is equivalent to commutative regular separability of (commutative closures of) \vas languages.
The decidability status of the regular{\sepsep}separability problem for the whole class of \vas languages remains open.

%
%

\myparagraph{Our contribution}
This paper is a continuation of the line of research trying to understand the regular separability problem for language classes 
beyond regular languages.
We report a further progress towards solving the open problem mentioned above: 
we show decidability of the regular{\sepsep}separability problem for the subclass of {\vas} languages
where we allow negative counter values during a run.
This class of languages is also known as languages of \emph{integer \vasses},
and it admits many different characterizations;
for instance, it coincides with languages of \emph{one-way reversal-bounded counter machines}~\cite{Ibarra78},
\emph{Parikh automata}~\cite{DBLP:conf/icalp/KlaedtkeR03}
(cf. also 
\cite[Proposition 11]{DBLP:conf/ncma/CadilhacFM11}),
which in turn are equivalent to the very similar model of \emph{constrained automata}~\cite{DBLP:journals/ijfcs/CadilhacFM13}.
In this paper, we present our results in terms of constrained automata,
but given the similarity with Parikh automata (and in light of their equivalence),
we overload the name Parikh automata for both models.

Notice that PA languages are not closed under complement,
and thus our decidability result about regular separability does not imply decidability of the regularity problem
(is the language of a given Parikh automaton regular?).
Moreover, the regularity problem for PA languages is actually \emph{undecidable} \cite{DBLP:conf/ncma/CadilhacFM11}%
\footnote{Later shown decidable for unambiguous PA~\cite{DBLP:journals/ijfcs/CadilhacFM13}.},
which makes our decidability result a rare instance of a case where regularity is undecidable but regular separability is decidable.
A result in a similar spirit is that piecewise testability of a context-free language is undecidable,
while piecewise-testable separability of two context-free languages is decidable \cite{DBLP:conf/fct/CzerwinskiMRZ15}.

Parikh automata 
are finite nondeterministic automata
where accepting runs are further restricted to satisfy a semilinear condition on the multiset of transitions appearing in the run.
Our decidability result is actually stated in the more general setting of \emph{$\C$-Parikh automata},
where $\C \subseteq \bigcup_{d \in \N} \powset{\N^d}$ is a class of sets of
vectors used as an acceptance condition.
We prove that the regular{\sepsep}separability problem for languages of $\C$-Parikh automata
reduces to the \emph{unary} separability problem for the class $\C$ itself,
provided that $\C$ is effectively closed under inverse images of affine functions.
Two prototypical classes $\C$ satisfying the latter closure condition are semilinear sets and \vas reachability sets.
Moreover, unary separability of semilinear set is known to be decidable~\cite{DBLP:journals/ipl/ChoffrutG06},
and as recalled before the same result has recently been extended to \vas reachability sets~\cite{DBLP:journals/corr/ClementeCLP16}.
As a consequence of our reduction,
we thus deduce decidability of regular separability of $\C$-Parikh automata languages
where the acceptance condition $\C$ can be instantiated to either the semilinear sets, or the \vas reachability sets. 
%

%
%


\section{Preliminaries}



\myparagraph{Vectors sets}
A set $S \subseteq \N^d$ is \emph{linear} if there exist a \emph{base} $b \in \N^d$
and \emph{periods} $p_1, \ldots, p_k \in \N^d$ s.t. $S = \{b + n_1 p_1 + \ldots + n_k p_k \mid n_1, \ldots, n_k \in \N\}$,
and it is \emph{semilinear} if it is a finite union of linear sets.
%
For a vector $v \in \N^d$ and $i \in \set {1, \dots, d}$, let $v[i]$ denote its $i$-th coordinate.
For $n\in\N$, we say that two vectors $x, y \in \N^d$ are \emph{$n$-unary equivalent},
written $x \equiv_n y$,
if for every coordinate $i \in \set {1, \dots, d}$ it holds $x[i] \equiv y[i] \mod n$
and moreover $x[i] \leq n \iff y[i] \leq n$.
A set $S \subseteq \N^d$ is \emph{unary} if for some $n$, $S$ is a union of equivalence classes of $\equiv_n$.
Intuitively, to decide membership in a unary set $S$
it is enough to count on every coordinate exactly up to some threshold $n$,
and modulo $n$ for values larger than $n$.
Every unary set is in particular semilinear. 

Let $\Sigma = \set {a_1, \ldots, a_k}$ be an ordered alphabet.
For a word $w \in \Sigma^*$ and a letter $a_i \in \Sigma$, by $\#_{a_i}(w)$ we denote the number of letters $a_i$ in $w$.
The \emph{Parikh image} of a word $w \in \Sigma^*$ is the vector $\parim(w) = (\#_{a_1}(w), \ldots, \#_{a_k}(w)) \in \N^k$. 
The \emph{Parikh image} of a language $L \subseteq \Sigma^*$
is $\parim(L) = \{\parim(w) \mid w \in L\}$, the set of Parikh images of all words belonging to $L$.

\myparagraph{Parikh automata}
A \emph{nondeterministic finite automaton} (NFA) $\A = (Q, I, F, T)$ over a finite alphabet $\Sigma$
consists of a finite set of states $Q$, distinguished subsets
of initial and final states $I, F \subseteq Q$, and a set of transitions $T \subseteq Q \times \Sigma \times Q$.
A \emph{nodeterministic Parikh automaton}%
\footnote{This is the same as \emph{constrained automata} from \cite{DBLP:journals/ijfcs/CadilhacFM13}}
is a pair $(\A, S)$ consisting of an NFA $\A$ and a semilinear set $S \subseteq \N^d$,
for $d = |T|$. 
A \emph{run} of a Parikh automaton over a word $w = a_1 \ldots a_n \in\Sigma^*$ is a sequence of transitions
$\rho = t_1 \ldots t_n \in T^*$, where $t_i = (q_{i-1}, a_i, q_{i})$, starting in an initial state $q_0$.
A run $\rho$ is \emph{accepting} if its ending state $q_n$ is final and $\parim(\rho) \in S$.
The language of a Parikh automaton, denoted $L(\A, S)$, contains all words $w$ admitting an accepting run;
it is thus a subset of the language $L(\A)$ of the underlying NFA.

\begin{remark}
	A more liberal definition for $\emptyword$-Parikh automata can be given
	by allowing transitions to read $\emptyword$'s,
	i.e, $T \subseteq Q \times (\Sigma \cup \set \emptyword) \times Q$.
	However, allowing $\emptyword$-transitions does not increase the expressiveness of Parikh automata,
	which follows from closure under (possibly erasing) homomorphisms of the latter class \cite[Property 4.(2)]{DBLP:conf/icalp/KlaedtkeR03}.
\end{remark}

One can generalize Parikh automata by using some other family of vector sets in the place of semilinear sets. 
For a class $\C \subseteq \bigcup_{d \in \N} \powset{\N^d}$ of vector sets,
a \emph{$\C$-Parikh automaton} is a pair $(\A, S)$, where $\A$ is an NFA and $S \in \C$. 
The language $L(\A, S)$ is then defined exactly as above. 

A $\C$-Parikh automaton $(\A, S)$ is \emph{deterministic} if the underlaying automaton $\A$ is so.
The languages of (non)deterministic $\C$-Parikh automata are shortly called (non)deterministic $\C$-Parikh languages below.

\section{Main result}

A function $f: \N^k \to \N^\ell$ is called \emph{affine} if it is of the form $f(v) = Mv + u$ for a  matrix $M$ of dimension
${\ell\times k}$ and a vector $u \in \N^\ell$.
A class of vector sets $\C \subseteq \bigcup_{d \in \N} \powset{\N^d}$ is called \emph{robust} if it fulfills the following two conditions:
\begin{itemize}
  \item $\C$ is effectively closed under inverse images of affine functions, 
  \item the unary{\sepsep}separability problem is decidable for $\C$.
\end{itemize}

As our main result we prove decidability of the regular{\sepsep}separability problem for $\C$-Parikh automata.

\begin{theorem}\label{thm:reg-sep}
The regular{\sepsep}separability problem is decidable for $\C$-Parikh automata, for every robust class $\C$ of vector sets.
\end{theorem}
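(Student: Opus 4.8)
The plan is to reduce regular separability of two $\C$-Parikh automata to the unary separability problem for $\C$, which is decidable by robustness. The bridge between the two problems is the following observation, which I would isolate as a lemma: if $S$ is a \emph{unary} set, then $L(\A,S)$ is regular. Indeed, membership of a run's Parikh vector in a unary set only requires counting, for each of the finitely many transitions, its number of occurrences up to a fixed threshold $n$ and modulo $n$; this is bounded information, so it can be maintained in the finite control of an NFA that simulates $\A$ and accepts exactly when it reaches a final state with a transition-count signature lying in $S$. Thus every unary acceptance condition yields a regular language, and the task becomes to decide when the $\C$-conditions of $\A_1,\A_2$ can be replaced by \emph{unary} ones that separate the two languages.

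Next I would set up a common space for the two conditions. Form the synchronised product $\B = \A_1 \times \A_2$, whose runs over a word $w$ are exactly the pairs of runs of $\A_1$ and $\A_2$ over $w$; each transition of $\B$ projects to a transition of $\A_1$ and one of $\A_2$, inducing two linear (hence affine) maps $f_1,f_2$ from $\N^{|T_\B|}$ onto the two transition-count spaces, with $f_i(\parim(\pi))$ equal to the Parikh vector of the $i$-th projected run. By closure of $\C$ under inverse images of affine functions, the pulled-back conditions $f_1^{-1}(S_1)$ and $f_2^{-1}(S_2)$ again lie in $\C$, now over a single automaton $\B$. To handle words accepted by $\A_1$ but having no run in $\B$ at all, I would first observe that any separator may be intersected with the regular language $L(\A_1)$, and that $L(\A_1)\setminus L(\A_2)$, being regular and disjoint from $V$, can be added to the separator for free; this reduces the problem to separating the two $\C$-Parikh languages $L(\B,f_1^{-1}(S_1))$ and $L(\B,f_2^{-1}(S_2))$ over the \emph{same} automaton $\B$.

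The core claim is then that these two languages are regular separable if and only if $f_1^{-1}(S_1)$ and $f_2^{-1}(S_2)$ are unary separable in $\N^{|T_\B|}$; feeding these two $\C$-sets to the unary separability oracle finishes the decision procedure. For the direction from unary to regular separability, a unary separator $W$ with $f_1^{-1}(S_1)\subseteq W$ and $W\cap f_2^{-1}(S_2)=\emptyset$ gives, via the lemma, two candidate regular languages over $\B$: the ``some run'' language $L(\B,W)$, which contains $L(\B,f_1^{-1}(S_1))$, and the ``all runs'' language $\overline{L(\B,\overline W)}$ (using that unary sets are closed under complement), which is disjoint from $L(\B,f_2^{-1}(S_2))$. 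These two constructions dually discharge the containment and the disjointness requirements, and I would combine them using the fact that separability presupposes $U\cap V=\emptyset$, so that no single word simultaneously carries an $S_1$-run and an $S_2$-run.

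The step I expect to be the main obstacle is exactly this reconciliation in the presence of \emph{nondeterminism}: a single word may have many runs with different Parikh vectors, so neither the ``some run'' nor the ``all runs'' separator works on its own, and a careless choice of $W$ may admit ``junk'' runs that spoil disjointness. Overcoming this, and proving the converse direction (from a regular separator back to a unary one), is where the real work lies: given a regular separator recognised by a DFA $D$, I would analyse the product of $D$ with $\B$ and apply a Ramsey/pumping argument on its cycles to show that $D$ cannot distinguish transition-count vectors beyond a fixed threshold and modulus, so that its behaviour on the relevant runs is governed by a unary invariant; this invariant is the sought unary separator of $f_1^{-1}(S_1)$ and $f_2^{-1}(S_2)$. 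Making the pumping argument respect both the acceptance conditions and the order-sensitivity of $D$ is the delicate point, and I would likely first bring $\B$ to an unambiguous form to tame the nondeterminism.
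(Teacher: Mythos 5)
Your overall strategy (form the product automaton, pull the constraints back along affine maps, then trade regular separability for a unary-separability question handed to the robustness oracle) matches the paper's high-level plan, but the specific equivalence you build it on is false, and the resulting algorithm would return wrong answers. The claim that $L(\B,f_1^{-1}(S_1))$ and $L(\B,f_2^{-1}(S_2))$ are regular-separable if and only if $f_1^{-1}(S_1)$ and $f_2^{-1}(S_2)$ are unary-separable fails in the direction from languages to vector sets, even when $\B$ is \emph{deterministic}. The problem is that the pulled-back sets are completely unconstrained on vectors that are not the Parikh image of any accepting run, and such ``phantom'' vectors can block unary separation. Concretely, let $\B$ be the deterministic two-state automaton accepting $(ab)^*$, with transitions $t_a=(p,a,q)$ and $t_b=(q,b,p)$, and take $U_1=\{(n,n)\mid n\in\N\}$ and $U_2=\{(n,m)\mid n\neq m\}$, both semilinear and disjoint. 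Every accepting run uses $t_a$ and $t_b$ equally often, so $L(\B,U_1)=(ab)^*$ and $L(\B,U_2)=\emptyset$: the languages are trivially regular-separable. Yet no unary set $W$ separates $U_1$ from $U_2$: if $W$ is a union of $\equiv_k$-classes containing $U_1$, then $(2k,2k)\in W$ and $(2k,3k)\equiv_k(2k,2k)$, so $(2k,3k)\in W\cap U_2$. Hence your oracle call would answer ``not separable'' on an instance whose correct answer is ``separable''. This is precisely what the paper's skeleton construction is for: $\Sigma^*$ is first partitioned into regular blocks $K_i$ according to the skeleton of the (unique) run, and within one block runs are re-parametrized by the affine map $\mu$ counting occurrences of the simple cycles over the skeleton's states. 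Every vector of cycle counts is realized by an actual word (the canonical words $w_v$), so the phantom-vector problem disappears, and the correct reduction is to unary separability of $\mu^{-1}(U_1)$ and $\mu^{-1}(U_2)$, not of $U_1,U_2$ themselves. Your closing pumping sketch (a regular separator cannot distinguish counts beyond a threshold and modulus) is indeed the right idea for that direction, but it only works after this re-parametrization.

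The second gap is your treatment of nondeterminism, which you explicitly defer; the repair you propose (bringing $\B$ to unambiguous or deterministic form) is not available, because deterministic and unambiguous Parikh automata are strictly weaker than nondeterministic ones, and determinizing the underlying NFA by the subset construction destroys the correspondence between runs and transition counts on which the whole setup rests. The missing idea, which is the content of the paper's Section~4, is to preserve \emph{separability} rather than the languages: every nondeterministic $\C$-Parikh language is the image $h(L')$ of a deterministic one under a letter-to-letter homomorphism $h$ (annotate each letter with the index of the transition taken); $\C$-Parikh languages, deterministic or not, are effectively closed under inverse images of such homomorphisms; and $h(L)$ is regular-separable from $K$ if and only if $L$ is regular-separable from $h^{-1}(K)$. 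Chaining these facts twice removes nondeterminism from both input automata, and only then do your product construction and the per-skeleton unary reduction go through.
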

The proof of Theorem~\ref{thm:reg-sep} is split into two parts.
In Section~\ref{sec:nondet2det} we provide a reduction of 
the regular separability problem of \emph{nondeterministic} $\C$-Parikh automata to the same problem of \emph{deterministic} ones;
this step is crucial for understanding how the regular{\sepsep}separability problem differs from the regularity problem, 
which does not admit a similar reduction.
Then in Section~\ref{sec:proof} we reduce the regular{\sepsep}separability problem for deterministic $\C$-Parikh automata to
the unary{\sepsep}separability problem for vector sets in $\C$.

In Section~\ref{sec:dec} we consider two instantiations of the class $\C$.
First, taking semilinear sets as $\C$ we derive decidability for plain Parikh automata.
Second, we consider the class $\Cvas$ of \emph{sections of} reachability sets of \vases
(detailed definitions are deferred to Section~\ref{sec:dec}), which allows us to obtain decidability for $\Cvas$-Parikh automata.
Note that the latter model properly extends plain Parikh automata.

\ignore{
\begin{remark}
	We show that Parikh automata with $\emptyword$-transitions recognize the same class of languages as Parikh automata without $\emptyword$-transitions.
	Let $(\A, S)$ be a Parikh automaton with $\emptyword$-transitions.
	Thus, $\A = (Q, I, F, T)$ with $T \subseteq Q \times \Sigma_\emptyword \times Q$,
	where $\Sigma_\emptyword = \Sigma \cup \set \emptyword$.
	%
	
	For a path $\pi$, let $\emptyword(\pi)$ be the set of paths which can be obtained from $\pi$ by adding $\emptyword$-cycles rooted at states in $\pi$.
	In a first step, we improve the acceptance condition $S$ to an equivalent acceptance condition $T$
	which is insensitive w.r.t. adding $\emptyword$-cycles:
	that is, we want $L(\A, S) = L(\A, T)$, and
	\begin{align}
		\label{eq:insensitiveness}
		\parim(\pi) \in S \quad \implies \quad \emptyword(\pi) \subseteq T
	\end{align}
	In order to do this, for every state $p \in Q$,
	let $C_p$ be the following semilinear set
	$$C_p = \setof {\parim(\rho)} {\rho \textrm { is a $\emptyword$-cycle rooted at state $p$}}.$$
	and for a set of states $P \subseteq Q$, let $C_P = \bigcup_{p \in P} C_p$.
	We assume that every state has the trivial empty $\emptyword$-cycle, and thus $\bar 0 \in C_P$.
	The \emph{support} of a vector $x \in N^d$ is the set of states which are visited by transitions with $> 0$ value in $x$.
	For a set of states $P \subseteq Q$,
	let $S|_P$ be the subset of $S$ containing precisely those vectors with support $P$.
	Then,
	\begin{align*}
		T = \bigcup_{P \subseteq Q} (S|_P + C_P).
	\end{align*}
	In order to prove \eqref{eq:insensitiveness},
	let $\pi$ be a run in $\A$, and let $P$ be set of states visited by $\pi$.
	For the ``only if'' direction,
	assume $\parim(\pi) \in S$ and $\pi' \in \emptyword(\pi)$.
	We have $\parim(\pi) \in S|_P$.
	Since $\pi'$ is obtained from $\pi$ by adding $\emptyword$-cycles rooted at states in $P$,
	the Parikh image of those cycles belong to $C_P$,
	and thus $\parim(\pi') \in S|_P + C_P$.
	%
	
	Since $S \subseteq T$, clearly $L(\A, S) \subseteq L(\A, T)$.
	For the other direction, assume $\pi$ is an accepting run in $\A$ s.t. $\parim(\pi) \in T$.
	Thus, there exists $P \subseteq Q$ s.t. $\parim(\pi) \in S|_P + C_P$.
	This implies that $P$ is the support of $\pi$,
	and moreover $\pi$ is obtained from some run $\pi'$ satisfying $S|_P$
	by removing $\emptyword$-transitions.
	TODO: need to say that it is obtained by removing $\emptyword$-cycles, but it is not clear how to achieve this here.
	Thus, $\pi$ and $\pi'$ read the same word,
	and $\pi'$ is an accepting run in $...$
\end{remark}
}



\section{From nondeterministic to deterministic PA}  \label{sec:nondet2det}

\newcommand{\im}[1]{\text{\sc Im}(#1)}

\newcommand{\image}[2]{{#1}(#2)}
\newcommand{\invimage}[2]{{#1^{-1}}(#2)}

The aim of this section is to prove the following lemma:
\begin{lemma} \label{l:nondet2det}
	If $\C$ is closed under inverse images of linear mappings,
	then the regular{\sepsep}separability problem of nondeterministic $\C$-Parikh automata
	effectively reduces to the same problem of deterministic ones. 
\end{lemma}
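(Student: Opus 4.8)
The plan is to reduce nondeterministic $\C$-Parikh automata to deterministic ones by exploiting the key observation that a Parikh automaton's acceptance is governed by the Parikh image of its \emph{transitions}, not the word it reads. The idea is to make the automaton "remember" which transition was taken by enriching the alphabet. Concretely, given a nondeterministic $\C$-Parikh automaton $(\A, S)$ with $\A = (Q, I, F, T)$ over $\Sigma$, I would build a determinized version reading over the extended alphabet $\Sigma \times T$ (or simply over $T$ itself), where the automaton reads annotated letters that name the transition explicitly. Over such an alphabet the underlying NFA becomes deterministic: from a state, reading $(a, t)$ with $t = (q, a, q')$ deterministically moves to $q'$ (or rejects if $t$ is not an outgoing $a$-transition from the current state). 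The semilinear/robust acceptance condition $S$ over $\N^{|T|}$ can be kept essentially unchanged, since each letter now records exactly which transition it fires.

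\textbf{Key steps.} First I would formalize the annotation map: let $h: (\Sigma \times T)^* \to \Sigma^*$ be the projection-homomorphism forgetting the $T$-component, and observe that a word $u$ over $\Sigma \times T$ corresponds to a \emph{single} run of $\A$ over $h(u)$, so the annotated automaton $\A'$ is deterministic and $L(\A', S) = h^{-1}(L(\A, S))$ in a suitable sense. Second, I would pass the \emph{pair} of input automata $(\A_1, S_1)$ and $(\A_2, S_2)$ through a common annotation over the disjoint (or product) transition alphabet so that both are simultaneously determinized while preserving the separation structure. Third—and this is where the closure hypothesis enters—I would verify that applying the annotation does not require changing $\C$: the acceptance condition on $\A'$ is obtained from $S$ by composing with the affine (indeed linear) map that reindexes transitions, which is exactly an inverse image of a linear mapping, so $\C$-membership is preserved by the assumed effective closure. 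Finally I would argue that $U$ and $V$ are regular-separable if and only if their annotated images $U', V'$ are: the annotation is a letter-to-letter renaming composed with a homomorphism, and regular languages are closed under homomorphisms and inverse homomorphisms, so a regular separator downstairs lifts to one upstairs and vice versa (using $h$ and $h^{-1}$ to transport the separator across the alphabet change, and checking disjointness is preserved).

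\textbf{The main obstacle} I expect is the "vice versa" direction of the separation equivalence, i.e.\ transporting a regular separator back and forth across the alphabet change while maintaining \emph{both} inclusion and disjointness. The annotation map $h$ is generally non-injective (many annotated words project to the same $\Sigma$-word, since $\A$ is nondeterministic and several runs read the same word), so one must be careful that a separator for the annotated languages, when pushed down through $h$, still separates $U$ and $V$—disjointness can be destroyed by collapsing. I would handle this by working with $h^{-1}$ in the direction that preserves Booleans cleanly: a regular separator $R$ of $U'$ from $V'$ need not directly yield one for $U, V$, so the correct framing is probably to show that $U,V$ are separable iff $U',V'$ are, where $U' = h^{-1}(U)$, $V' = h^{-1}(V)$ are the \emph{full} preimages; then $R$ separates $U',V'$ iff $h^{-1}(S_R)$ does for a regular $S_R$, and inclusion/disjointness are both preserved because $h^{-1}$ commutes with Boolean operations. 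Getting this correspondence tight—so that determinism is genuinely achieved while the separability question is provably unchanged—is the delicate heart of the argument, and I would devote the bulk of the proof to making the $h$-vs-$h^{-1}$ bookkeeping precise.
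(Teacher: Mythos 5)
Your starting point is the same as the paper's (annotate each letter with the transition being fired, so the annotated automaton becomes deterministic, and relate annotated to original languages via the forgetful letter-to-letter homomorphism $h$), but the proposal breaks exactly at the identity you rely on, namely ``$L(\A',S) = h^{-1}(L(\A,S))$ in a suitable sense.'' No suitable sense exists. The deterministic run-annotated language $L'$ satisfies the \emph{image} equation $h(L') = L(\A,S)$ and is in general a \emph{proper subset} of the full preimage $h^{-1}(L(\A,S))$: the preimage also contains words whose annotation is garbage but whose projection happens to be accepted via some \emph{other} run. (Tiny example: two $a$-transitions $t_1,t_2$ from the initial state, only $t_1$ leading to a final state; then $L' = \{(a,t_1)\}$ while $h^{-1}(L(\A,S)) = \{(a,t_1),(a,t_2)\}$.) Your proposed fix picks the wrong one of these two languages. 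The full preimages $h^{-1}(U), h^{-1}(V)$ indeed enjoy the clean Boolean transfer you describe, but they are \emph{not} deterministic $\C$-Parikh languages: an automaton for $h^{-1}(U)$ cannot trust the annotations, since non-witnessing annotations also belong to the preimage, so it must still guess an accepting run of the original nondeterministic automaton. This framing transfers separability but achieves no determinization. Conversely, the run-annotated languages are deterministic, but for them the ``$h^{-1}$ commutes with Booleans'' argument is unavailable, and, as you yourself observe, pushing a separator down through the non-injective $h$ can destroy disjointness. Your proposal thus oscillates between two different languages over the annotated alphabet, one with the transfer property and one with determinism, without ever producing a single language that has both.

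What is missing is the asymmetric transfer lemma that the paper isolates as the cornerstone (Lemma~\ref{l:red}): $h(L)$ and $K$ are regular-separable if and only if $L$ and $h^{-1}(K)$ are, with the image applied to the language being determinized and the preimage to the \emph{other} language. (The nontrivial direction takes a separator $R$ of $L$ and $h^{-1}(K)$ and uses $h(R)$; disjointness from $K$ follows from $h(h^{-1}(K)) \subseteq K$.) With this lemma the reduction is necessarily \emph{sequential}, not simultaneous: write $L = h(L_1)$ with $L_1$ deterministic (Lemma~\ref{l:det}) and transfer to the pair $(L_1, h^{-1}(K))$; the language $h^{-1}(K)$ is still only nondeterministic (Lemma~\ref{l:closure}), so write $h^{-1}(K) = g(K_1)$ with $K_1$ deterministic and transfer once more, after swapping the pair, to $(K_1, g^{-1}(L_1))$; finally $g^{-1}(L_1)$ \emph{remains} deterministic because inverse images under letter-to-letter homomorphisms preserve determinism (Lemma~\ref{l:closuredet}). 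Your simultaneous product-annotation variant can in fact be repaired, but only by an argument of the same nature: one must exploit that each run-annotated language is insensitive to re-annotation of the other automaton's component, and quantify universally over such re-annotations to rebuild a separator — which is precisely the content your proposal leaves unproved.
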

Before embarking on the proof, we need to state and prove a couple of auxiliary facts.
In the rest of the section, we assume that the class $\C$ is closed under inverse images of linear mappings.
Given two alphabets $\Sigma$ and $\Gamma$,
a \emph{letter-to-letter homomorphism} is a function $h : \Sigma \to \Gamma$
which extends homomorphically to a function from $\Sigma^*$ to $\Gamma^*$,
and thus to languages.



\begin{lemma}  \label{l:det}
	Every nondeterministic $\C$-Parikh language is the image of a letter-to-letter homomorphism
	of a deterministic $\C$-Parikh language.
\end{lemma}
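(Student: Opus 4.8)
The plan is to \emph{internalise the nondeterministic choices of $\A$ into an enriched alphabet}, so that the resulting automaton becomes deterministic, and then to erase the enrichment by a letter-to-letter homomorphism. Concretely, given a nondeterministic $\C$-Parikh automaton $(\A, S)$ with $\A = (Q, I, F, T)$ over $\Sigma$ and $S \subseteq \N^{|T|}$, I would take the new alphabet to be the set of transitions itself, $\Gamma = T$, and define the homomorphism $h : \Gamma \to \Sigma$ by $h(t) = a$ for every $t = (q, a, q') \in T$, i.e.\ $h$ just recovers the letter read by a transition.

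Next I would build a deterministic $\C$-Parikh automaton $(\B, S')$ over $\Gamma$ recognising a coded form of the accepting runs of $(\A, S)$. The automaton $\B = (Q, I, F, T_\B)$ keeps the states, initial and final states of $\A$, and for each transition $t = (q, a, q') \in T$ it has one transition $(q, t, q') \in T_\B$ reading the letter $t$. This $\B$ is deterministic: from a state $q$ the letter $t$ is readable only when $\source{t} = q$, and in that case the target $\target{t}$ is uniquely determined by $t$; in particular the first letter of a word already pins down the starting state, so every word over $\Gamma$ has at most one run in $\B$. For the acceptance condition, observe that a run of $\B$ over a word $u = t_1 \cdots t_n \in T^*$ exists precisely when $t_1 \cdots t_n$ is a genuine run of $\A$ (consecutive transitions compose, $\source{t_1} \in I$, $\target{t_n} \in F$), and that the Parikh image of $u$ over the alphabet $\Gamma = T$ coincides with the transition-Parikh-image $\parim(t_1 \cdots t_n)$ used by $\A$'s acceptance condition. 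Hence I can take $S'$ to be $S$ transported along the canonical bijection between the transitions $T_\B$ of $\B$ and the coordinates $T$ of $S$. As this bijection is merely a renaming of coordinates, i.e.\ a linear map, the membership $S' \in \C$ follows from the standing assumption that $\C$ is closed under inverse images of linear mappings.

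It then remains to check the identity $h(L(\B, S')) = L(\A, S)$. By the previous paragraph, $L(\B, S')$ is exactly the set of accepting runs of $(\A, S)$, each viewed as a word over $\Gamma = T$; applying $h$ to such a word yields the word over $\Sigma$ read by that run, which is a word of $L(\A, S)$, and conversely every word of $L(\A, S)$ arises from at least one accepting run. Thus $(\B, S')$ is a deterministic $\C$-Parikh automaton whose language maps onto $L(\A, S)$ under the letter-to-letter homomorphism $h$, as required.

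I expect the only delicate points to be (i) the treatment of determinism when $\A$ has several initial states --- here one relies on the fact that the first enriched letter already selects the start state, so that uniqueness of runs is preserved; and (ii) the bookkeeping needed to view the acceptance condition $S$, originally a subset of $\N^{T}$, as a condition $S' \in \C$ on the transitions of $\B$, which is exactly where closure of $\C$ under inverse images of linear mappings is invoked.
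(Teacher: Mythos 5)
Your proof is correct and takes essentially the same route as the paper: both constructions resolve nondeterminism by enriching the alphabet (the paper uses $\Sigma \times \{1,\dots,n\}$ with indices up to the branching degree, you use the transition set $T$ itself as the alphabet), build the evident deterministic automaton over the enriched alphabet, transport the acceptance condition into $\C$ via closure under inverse images of linear maps, and erase the enrichment with the letter-to-letter homomorphism. The one delicate point you flag yourself --- that with several initial states the automaton is deterministic only in the sense of having at most one run per word --- is present in exactly the same form in the paper's own construction, so it is not a gap specific to your argument.
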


\begin{proof}
	Fix a nondeterministic $\C$-Parikh automaton $(\A, S)$
	of maximal nondeterministic branching $n$
	recognizing the language $L(\A, S) \subseteq \Sigma^*$. 
	Consider the extended alphabet $\Gamma = \Sigma \times \set {1, \ldots, n}$
	obtained by labelling each symbol from $\Sigma$ with an index to resolve nondeterminism,
	and consider the letter-to-letter homomorphism $h : \Gamma \to \Sigma$ that maps $(a, i)$ to $a$.
	Let $(\B, T)$ be the deterministic $\C$-Parikh automaton over $\Gamma$
	which is obtained from $\A$ by a relabelling in every state
	the $i$-th transition over $a$ by $(a, i)$.
	The acceptance condition $T \subseteq \N^{|\Sigma| \cdot n}$
	is $T := \phi^{-1}(S)$,
	where $\phi : \N^{|\Sigma| \cdot n} \to \N^{|\Sigma|}$
	is the linear mapping that sums up the entries $(a, 1), \dots, (a, n)$ corresponding to the original symbol $a$.
	One easily verifies that $L(\A, S) = \image{h}{L(\B, T)}$, as required.
	\qed
\end{proof}
%
%
\begin{lemma} \label{l:closuredet}
	Deterministic $\C$-Parikh languages are effectively closed under inverse images of letter-to-letter homomorphisms. 
\end{lemma}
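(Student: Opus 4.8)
The plan is to mirror the construction in the proof of Lemma~\ref{l:det}, except that now $\B$ reads the letters of $\Sigma$ while simulating $\A$ on their $h$-images in $\Gamma$. Fix a deterministic $\C$-Parikh automaton $(\A, S)$ over $\Gamma$ recognizing $L = L(\A, S)$, with $\A = (Q, I, F, T)$, $T \subseteq Q \times \Gamma \times Q$ and $S \subseteq \N^{|T|}$, and let $h : \Sigma \to \Gamma$ be a letter-to-letter homomorphism. I would build a deterministic $\C$-Parikh automaton $(\B, S')$ over $\Sigma$ with $L(\B, S') = \invimage{h}{L}$.

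First I would define $\B = (Q, I, F, R)$ over the same states, with transition set $R = \setof{(q, a, q')}{a \in \Sigma,\ (q, h(a), q') \in T}$; that is, each $\A$-transition $t = (q, b, q')$ is split into one $\B$-transition $(q, a, q')$ for every letter $a$ with $h(a) = b$. Determinism is preserved: since $\A$ is deterministic, for each $q$ and each $b \in \Gamma$ there is at most one transition $(q, b, q') \in T$, and since $h(a)$ is a single letter, for each $q$ and each $a \in \Sigma$ there is at most one transition $(q, a, q') \in R$. Consequently, reading $w = a_1 \cdots a_n \in \Sigma^*$ in $\B$ and reading $h(w) = h(a_1) \cdots h(a_n)$ in $\A$ traverse the very same sequence of states, yielding a bijection between the (at most one) run $\rho'$ of $\B$ over $w$ and the run $\rho$ of $\A$ over $h(w)$; in particular $\rho'$ ends in a final state iff $\rho$ does.

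Next I would set up the acceptance condition. There is a natural linear map $\psi : \N^{|R|} \to \N^{|T|}$ which, for each $\A$-transition $t = (q, b, q')$, sums the coordinates of all $\B$-transitions $(q, a, q')$ with $h(a) = b$ obtained from $t$ (an empty sum, giving $0$, when $b$ lies outside the image of $h$). Since each $\B$-transition $(q, a, q')$ records a use of the $\A$-transition $(q, h(a), q')$, the two runs satisfy $\psi(\parim(\rho')) = \parim(\rho)$ coordinatewise. I then take $S' := \invimage{\psi}{S}$, which belongs to $\C$ because $\C$ is assumed closed under inverse images of linear mappings. Correctness is then immediate: $w \in L(\B, S')$ iff $\rho'$ ends in a final state and $\parim(\rho') \in \invimage{\psi}{S}$, iff $\rho$ ends in a final state and $\psi(\parim(\rho')) = \parim(\rho) \in S$, iff $h(w) \in L$, iff $w \in \invimage{h}{L}$.

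The construction is essentially routine; there is no real obstacle, only bookkeeping. The two points that need care are checking that determinism survives the splitting of transitions (this uses that $h$ assigns a single target letter to each $a$), and verifying that the summing map $\psi$ correctly relates the two Parikh images when one $\A$-transition fans out into several $\B$-transitions. Once $\psi$ is identified as a linear map, the closure hypothesis on $\C$ discharges the remaining requirement that $S' \in \C$.
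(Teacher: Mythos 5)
Your proof is correct and follows essentially the same construction as the paper: split each transition of $\A$ into one transition per preimage letter under $h$, and take the inverse image of the acceptance condition under the linear map that sums the coordinates of the split transitions. The extra details you supply (the determinism check and the verification that $\psi$ relates the Parikh images of corresponding runs) are exactly the points the paper leaves implicit.
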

\begin{proof}
	Given a deterministic $\C$-Parikh automaton $(\A, S)$ over $\Sigma$
	and a letter-to-letter homomorphism $h : \Gamma \to \Sigma$, one computes a deterministic
	$\C$-Parikh automaton $(\B, T)$ as follows.
	The automaton $\B$ is obtained by replacing every transition $(p, a, q)$ in $\A$ 
	by transitions $(p, b, q)$, one for every $b\in \invimage{h}{a}$.
	The constraint $T\in\C$ is 
	the inverse image of $S$ under the linear function 
	that sums up values on all coordinates corresponding to letters in $\invimage{h}{a} \subseteq \Gamma$
	in order to compute the value on the coordinate corresponding to $a \in \Sigma$.
	Finally, the constraint $T$, and hence also the automaton $(\B, T)$ can be computed.
	\qed
\end{proof}
%
%
\begin{lemma} \label{l:closure}
	Nondeterministic $\C$-Parikh languages are effectively closed under inverse images of letter-to-letter homomorphisms. 
\end{lemma}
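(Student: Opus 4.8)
The plan is to reuse, essentially verbatim, the construction from the proof of Lemma~\ref{l:closuredet}, observing that it never relies on determinism of the input automaton. Concretely, I would fix a nondeterministic $\C$-Parikh automaton $(\A, S)$ over $\Sigma$ recognizing $L(\A,S) \subseteq \Sigma^*$, together with a letter-to-letter homomorphism $h : \Gamma \to \Sigma$, and build a nondeterministic $\C$-Parikh automaton $(\B, T)$ with $L(\B, T) = \invimage{h}{L(\A,S)}$. The only point to keep in mind throughout is that $\B$ is allowed to be nondeterministic, which removes any tension with the analogous deterministic statement.

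First I would define $\B$ by replacing every transition $t = (p, a, q)$ of $\A$ with one transition $(p, b, q)$ for each $b \in \invimage{h}{a}$, keeping the same initial and final states; thus the transitions of $\B$ are naturally indexed by pairs $(t, b)$ where $t$ is a transition of $\A$ over some letter $a$ and $b \in \invimage{h}{a}$. Next I would take the linear map $\psi : \N^{d'} \to \N^{d}$, where $d$ and $d'$ are the numbers of transitions of $\A$ and of $\B$ respectively, that on the coordinate of an $\A$-transition $t$ over $a$ sums the coordinates of all pairs $(t, b)$ with $b \in \invimage{h}{a}$, and set $T := \invimage{\psi}{S}$. Since $\psi$ is linear and $\C$ is assumed (effectively) closed under inverse images of linear mappings, $T$ belongs to $\C$ and is computable, so $(\B, T)$ is effectively a nondeterministic $\C$-Parikh automaton.

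The correctness step is to verify $L(\B, T) = \invimage{h}{L(\A, S)}$. The key observation is a length-preserving correspondence between runs: a run $\rho = (t_1, b_1)\cdots(t_n, b_n)$ of $\B$ over a word $w = b_1 \cdots b_n \in \Gamma^*$ projects to the run $t_1 \cdots t_n$ of $\A$ over $h(w) = h(b_1)\cdots h(b_n)$, and conversely every run of $\A$ over $h(w)$ lifts in this way; moreover $\psi(\parim(\rho)) = \parim(t_1 \cdots t_n)$ by the very definition of $\psi$. Since $\B$ and $\A$ share the same final states, $\rho$ is accepting in $(\B,T)$, i.e.\ it ends in a final state and $\parim(\rho) \in T = \invimage{\psi}{S}$, if and only if its projection is accepting in $(\A, S)$, i.e.\ ends in a final state and $\parim(t_1\cdots t_n) = \psi(\parim(\rho)) \in S$. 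This yields $w \in L(\B, T)$ iff $h(w) \in L(\A, S)$, as required. This lifting argument is precisely where determinism is never used: it goes through for arbitrary $\A$, with $\B$ simply nondeterministic as well.

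The main (and essentially only) obstacle is administrative rather than conceptual: keeping the bookkeeping between the two transition sets precise so that $\psi$ really sends $\parim(\rho)$ to the Parikh image of the projected run, and confirming that the sole closure hypothesis invoked is inverse images under linear maps. As an alternative route one could derive the lemma from the earlier results: by Lemma~\ref{l:det} write $L(\A,S) = \image{g}{M}$ for a deterministic $\C$-Parikh language $M$ over some $\Delta$ and a letter-to-letter $g : \Delta \to \Sigma$, then use the fibre-product identity $\invimage{h}{\image{g}{M}} = \image{\pi_2}{\invimage{\pi_1}{M}}$, where $\pi_1, \pi_2$ are the projections onto $\Delta$ and $\Gamma$ of the synchronization alphabet $\setof{(d,c) \in \Delta\times\Gamma}{g(d) = h(c)}$; closure under the inverse image $\invimage{\pi_1}{\cdot}$ is Lemma~\ref{l:closuredet}, while closure under the forward image $\image{\pi_2}{\cdot}$ is immediate by relabelling transitions. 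I would nonetheless prefer the direct construction above, as it is shorter and isolates the single closure assumption actually needed.
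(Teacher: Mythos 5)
Your proposal is correct and matches the paper's proof exactly: the paper also proves this lemma by reusing the construction of Lemma~\ref{l:closuredet} verbatim, noting only that the resulting automaton need not be deterministic. Your spelled-out run correspondence and the bookkeeping for the linear map $\psi$ are just a more detailed writeup of that same argument (the alternative fibre-product route you sketch is not needed and not what the paper does).
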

\begin{proof}
	The construction is exactly the same as in the proof of Lemma~\ref{l:closuredet} above,
	but the resulting automaton does not have to be deterministic.
\qed
\end{proof}
The next lemma is the cornerstone of our reduction.
It allows to make one automaton deterministic without introducing nondeterminism in the second one.
\begin{lemma} \label{l:red}
	Languages $\image{h}{L}$ and $K$ are regular{\sepsep}separable if, and only if, $L$ and $\invimage{h}{K}$ are so.
\end{lemma}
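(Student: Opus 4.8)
The plan is to prove a biconditional, so I would establish each direction separately by constructing an explicit regular separator for one side out of a regular separator for the other. The key observation is that letter-to-letter homomorphisms and their inverse images interact very cleanly with regular languages: the class of regular languages is effectively closed under both direct and inverse images of such homomorphisms, a fact I will use freely. I would also keep in mind the elementary set-theoretic identities $\image{h}{\invimage{h}{X}} \subseteq X$ and $X \subseteq \invimage{h}{\image{h}{X}}$, together with the fact that for a letter-to-letter $h$ one has $\invimage{h}{\image{h}{L}} = L$ whenever $h$ is injective on the relevant alphabet — but in general we only get the inclusion, so the argument must be set up to avoid relying on injectivity.

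For the forward direction, suppose $R$ is a regular language separating $\image{h}{L}$ from $K$, meaning $\image{h}{L} \subseteq R$ and $K \mycap R = \emptyset$. I would propose $R' := \invimage{h}{R}$ as a separator for $L$ and $\invimage{h}{K}$. First, $R'$ is regular because regular languages are closed under inverse letter-to-letter homomorphic images. Next, from $\image{h}{L} \subseteq R$ and monotonicity of $\invimage{h}{\cdot}$ we get $\invimage{h}{\image{h}{L}} \subseteq \invimage{h}{R} = R'$; combining with $L \subseteq \invimage{h}{\image{h}{L}}$ yields $L \subseteq R'$. For disjointness, note $\invimage{h}{K} \mycap R' = \invimage{h}{K} \mycap \invimage{h}{R} = \invimage{h}{K \mycap R} = \invimage{h}{\emptyset} = \emptyset$, using that inverse images commute with intersection. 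This direction is therefore completely routine.

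The backward direction is the more delicate one, and I expect it to be the main obstacle. Suppose $R$ is a regular separator for $L$ and $\invimage{h}{K}$, so $L \subseteq R$ and $\invimage{h}{K} \mycap R = \emptyset$. The naive guess $\image{h}{R}$ does not work directly: from $L \subseteq R$ we do get $\image{h}{L} \subseteq \image{h}{R}$, but the condition $\invimage{h}{K} \mycap R = \emptyset$ only tells us that no word of $R$ maps into $K$, and $\image{h}{R}$ could still meet $K$ if some word outside $R$ but with the same $h$-image as a word of $R$ lies in $K$ — so we need a separator that is ``saturated'' under $h$-equivalence. The fix I would use is to first replace $R$ by $R'' := \invimage{h}{\image{h}{R}} \mycap R$ or, more cleanly, to observe that we may assume without loss of generality that $R$ is $h$-saturated, i.e. $R = \invimage{h}{\image{h}{R}}$; indeed replacing $R$ by $\invimage{h}{\image{h}{R}}$ preserves regularity, still contains $L$, and one checks it remains disjoint from $\invimage{h}{K}$ precisely because $\invimage{h}{K}$ is itself $h$-saturated. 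Once $R$ is $h$-saturated, I set $R_K := \image{h}{R}$, which is regular by closure of regular languages under direct letter-to-letter homomorphic images. Then $\image{h}{L} \subseteq \image{h}{R} = R_K$ follows from $L \subseteq R$, and for disjointness I argue by contradiction: if some $w \in K \mycap R_K$, then $w = \image{h}{v}$ for some $v \in R$, so $v \in \invimage{h}{K}$, contradicting $\invimage{h}{K} \mycap R = \emptyset$. The crux is thus the saturation step, which is exactly what lets us transfer disjointness across the homomorphism without losing the covering inclusion.
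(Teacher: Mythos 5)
Your proof is correct, and its skeleton is the same as the paper's: use $\invimage{h}{R}$ in one direction and $\image{h}{R}$ in the other. The only difference is the saturation detour in the backward direction, and it rests on a misdiagnosis. You claim the naive separator $\image{h}{R}$ can fail because ``$\image{h}{R}$ could still meet $K$ if some word outside $R$ but with the same $h$-image as a word of $R$ lies in $K$'' --- but this scenario cannot occur under the hypothesis (note also that words of $R$ and words of $K$ live over different alphabets, so what you mean is that the common $h$-image lies in $K$). Indeed, if $w \in \image{h}{R} \mycap K$, then $w = \image{h}{v}$ for some $v \in R$, and since $\image{h}{v} \in K$ we get $v \in R \mycap \invimage{h}{K}$, contradicting disjointness. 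In other words, $\image{h}{R} \mycap K = \emptyset$ is \emph{equivalent} to $R \mycap \invimage{h}{K} = \emptyset$ for any sets whatsoever; this is exactly the paper's one-line argument via $\image{h}{\invimage{h}{K}} \subseteq K$. The telling sign is that your own final disjointness argument (``if some $w \in K \mycap R_K$, then $w = \image{h}{v}$ for some $v \in R$, so $v \in \invimage{h}{K}$, contradiction'') never invokes the saturation of $R$: it is verbatim the direct argument, and your justification of the saturation step itself secretly contains the same reasoning. So the ``crux'' you identify is illusory; the saturation replacement (which is sound, just redundant) can be deleted, leaving precisely the paper's proof.
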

\begin{proof}

	For the ``only if'' direction,
	if a regular language $R$ separates $\image{h}{L}$ and $K$ then the language $\invimage{h}{R}$ 
	separates $L$ and $\invimage{h}{K}$.
	Indeed, the inclusion $L \subseteq \invimage{h}{R}$ follows from the inclusion $\image{h}{L} \subseteq R$
	since $L \subseteq \invimage{h}{\image{h}{L}}$,
	and the disjointness of
	$\invimage{h}{R}$ and $\invimage{h}{K}$ follows from disjointness of $R$ and $K$.

	For the ``if'' direction,
	if a regular language $R$ separates $L$ and $\invimage{h}{K}$ then the language 
	$\image{h}{R}$ separates the languages $\image{h}{L}$ and $K$.
	The inclusion $\image{h}{L} \subseteq \image{h}{R}$ follows by the inclusion $L \subseteq R$, and 
	the disjointness of $\image{h}{R}$ and $K$ follows from disjointness of $R$ and $\invimage{h}{K}$
	since $\image{h}{\invimage{h}{K}} \subseteq K$.
	\qed
\end{proof}
%
\myparagraph{Proof of Lemma~\ref{l:nondet2det}}
	Let $L, K$ be two nondeterministic $\C$-PA languages.
	By Lemma~\ref{l:det}, we may assume that $L$ is the image $\image{h}{L_1}$ of a deterministic language $L_1$.
	By Lemma~\ref{l:red}, regular separability for $\image{h}{L_1}, K$ is the same as for $L_1, \invimage{h}{K}$.
	By Lemma~\ref{l:closure}, $\invimage{h}{K}$ is a nondeterministic language itself,
	so by Lemma~\ref{l:det} it equals the image $\image{g}{K_1}$ of a deterministic language $K_1$.
	We have thus reduced to regular separability for $L_1, \image{g}{K_1}$,
	where now both $L_1$ and $K_1$ are deterministic languages.
	Since regular{\sepsep}separability is symmetric,
	regular separability for $L_1, \image{g}{K_1}$ is the same for $\image{g}{K_1}, L_1$.
	Applying once more Lemma~\ref{l:red},
	the latter statement is equivalent to regular separability for $K_1, \invimage{g}{L_1}$.
	By Lemma~\ref{l:closuredet}, $\invimage{g}{L_1}$ is a deterministic language.
	Since every step was effective, this concludes the proof.
	\qed
%


\section{Regular{\sepsep}separability reduces to unary{\sepsep}separability}  \label{sec:proof}

In this section we reduce regular{\sepsep}separability of deterministic $\C$-Parikh languages to unary{\sepsep}separability of
vector sets in $\C$.
\begin{lemma}  \label{l:deciddet}
	Let $\C$ be a class of vectors closed under inverse images of affine mappings.
	The regular{\sepsep}separability problem for deterministic $\C$-Parikh automata
	reduces to the unary{\sepsep}separability problem for vector sets in $\C$.
\end{lemma}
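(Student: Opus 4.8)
The plan is to take two deterministic $\C$-Parikh automata $(\A, S)$ and $(\B, P)$ recognizing languages $L$ and $K$ respectively, and to show that the existence of a regular separator for $L$ and $K$ is equivalent to the existence of a unary separator for two vector sets in $\C$ built from $S$ and $P$. The driving intuition is that, since both automata are deterministic, every word $w$ has at most one run in each automaton, so the Parikh image of that run is a well-defined function of $w$. Thus membership of $w$ in $L$ (resp. $K$) is governed entirely by whether the vector of transition counts lands in $S$ (resp. $P$). The key construction is to pass to the \emph{product automaton} $\A \times \B$, a deterministic NFA whose transitions track, for each input letter, which transition of $\A$ and which of $\B$ it triggers. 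On this product, a single run over $w$ records both the $\A$-transition profile and the $\B$-transition profile, and there are affine projections $\phi$ and $\psi$ mapping the product-transition-count vector to the $\A$- and $\B$-transition-count vectors respectively.

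First I would make the product construction precise and argue, using closure of $\C$ under inverse images of affine mappings, that $S' := \invimage{\phi}{S}$ and $P' := \invimage{\psi}{P}$ both lie in $\C$; these are the vector sets whose unary separability we will test. Restricting attention to words accepted on the product (i.e.\ words reaching a state that is final in both coordinate components, or rather handling the final-state condition carefully — see below), the Parikh images of runs that witness membership in $L$ form (essentially) the set of product-run vectors lying in $S'$, and similarly for $K$ and $P'$. The heart of the reduction is then the following claim, which I would prove in both directions: $L$ and $K$ are regular-separable if and only if $S'$ and $P'$ (suitably restricted to reachable-and-accepting product runs) are unary-separable. For the ``unary implies regular'' direction, given a unary set $U$ separating $S'$ from $P'$, I would build a regular separator by composing the deterministic product automaton with a finite-state device that checks membership of the run's Parikh image in $U$; this is possible precisely because unary sets only require counting each coordinate up to a fixed threshold $n$ and modulo $n$ beyond that, which a finite automaton can do with finitely many states. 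Determinism is essential here so that ``the run'' is unambiguous and the counting device reads off a single well-defined vector.

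For the converse (``regular implies unary'') direction I would argue contrapositively: if no unary set separates $S'$ from $P'$, then for every threshold $n$ there exist vectors $x \in S'$ and $y \in P'$ with $x \equiv_n y$, and I would use this to defeat any candidate regular separator $R$. The idea is that a regular language with $m$ states cannot distinguish words whose product runs carry $n$-unary-equivalent Parikh vectors, for $n$ chosen large relative to $m$ (for instance via a pumping argument, exploiting that a loop in the product automaton adds a fixed period vector to the run's Parikh image, and enough iterations make the count exceed any threshold while preserving the residue modulo the cycle length of $R$'s transition monoid). Concretely, from $x \in S'$ and $y \in P'$ that are $n$-unary-equivalent I would manufacture an accepted word $w_L \in L$ and an accepted word $w_K \in K$ that $R$ is forced to treat identically, contradicting $w_L \in R$ and $w_K \notin R$.

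The main obstacle I expect is handling two intertwined technical points cleanly. First, the acceptance condition in a Parikh automaton couples the semilinear/vector constraint with reaching a final state; on the product one must separate the ``reaches an $\A$-final state'' requirement from the ``reaches a $\B$-final state'' requirement, so that $S'$ isolates words in $L$ while $P'$ isolates words in $K$, and this bookkeeping (possibly restricting $\C$ membership to runs ending in the right component states, which can again be folded into an affine inverse image after encoding final-state information into the vector) is fiddly. Second, and harder, is making the pumping argument in the ``regular implies unary'' direction genuinely produce \emph{$n$-unary}-equivalent vectors rather than merely modularly-equivalent ones: I must ensure both that large coordinates stay large and that small coordinates stay pinned below the threshold $n$, which requires choosing the pumped cycles and the threshold $n$ jointly with the size of the separator $R$. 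Getting the quantifier order right here — $n$ depends on $R$, and the vectors $x,y$ depend on $n$ — is where the real care lies.
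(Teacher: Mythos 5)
You reproduce the paper's first step faithfully (product automaton, acceptance conditions pulled back along affine projections, closure of $\C$ under affine preimages), but the core equivalence you then propose --- regular separability of the languages iff \emph{unary} separability of the transition-count sets $S'=\phi^{-1}(S)$ and $P'=\psi^{-1}(P)$ --- is false, and the caveat you add (``suitably restricted to reachable-and-accepting product runs'') cannot be implemented inside $\C$. For the falsity, take both deterministic PA over the same automaton $\A$ with states $q_0$ (initial and final), $q_1,q_2$ and transitions $t_1=(q_0,a,q_1)$, $t_2=(q_1,b,q_0)$, $t_3=(q_0,b,q_2)$, $t_4=(q_2,a,q_0)$; every accepting run satisfies $x_1=x_2$ and $x_3=x_4$. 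Let $S_1=\{x\in\N^4 : x_1=x_2,\ x_1 \text{ even}\}\cup\{x : x_1\neq x_2,\ x_3=2x_1\}$ and $S_2=\{x\in\N^4 : x_1=x_2,\ x_1 \text{ odd}\}\cup\{x : x_1\neq x_2,\ x_3=3x_1\}$ (both semilinear). Then $L(\A,S_1)$ and $L(\A,S_2)$ are the words of $\{ab,ba\}^*$ with an even (resp.\ odd) number of $ab$-blocks: disjoint regular languages, hence trivially regular-separable. Yet for every $n\geq 1$ the vectors $(2n,3n,4n,0)\in S_1$ and $(2n,3n,6n,0)\in S_2$ are $\equiv_n$-equivalent, so $S_1,S_2$ are \emph{not} unary-separable; your reduction would answer ``no'' on a ``yes''-instance. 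The culprit is exactly the phantom vectors that are not Parikh images of any run. Restricting to realizable vectors would discard them, but the set of realizable run-images is a nontrivial semilinear set, and $\C$ is only assumed closed under inverse images of affine maps, not under intersection with semilinear sets --- so the restricted sets need not belong to $\C$ and you no longer have an instance of the unary-separability problem for $\C$.

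There is also a second, independent gap in your ``regular implies unary'' direction, even granting the restriction: unary-equivalent \emph{realizable} vectors are witnessed by words that may traverse the automaton in completely different orders, while a pumping argument against a candidate separator $R$ needs two words that differ only by repetitions of subwords at fixed positions. The paper supplies precisely the two missing ideas. First, it partitions $\Sigma^*$ into finitely many regular blocks $K_i$ according to the \emph{skeleton} of the (unique, by determinism) run, reducing to separability of $L_1\mycap K_i$ and $L_2\mycap K_i$. Second, within each block it changes coordinates from transition counts to counts of \emph{simple cycles} over the skeleton's states, via the affine map $\mu(x_1,\ldots,x_m)=\parim(\rho_i)+\sum_j \parim(c_j)\cdot x_j$, and poses unary separability for $V_1=\mu^{-1}(U_1)$, $V_2=\mu^{-1}(U_2)\in\C$. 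In cycle-count coordinates there are no phantom vectors: every $v\in\N^m$ is realized by a canonical word $w_v$ obtained by pasting the cycles into the skeleton at fixed positions, and if $v_1\equiv_\omega v_2$ (with $\omega$ the idempotent power of $R$'s syntactic monoid) then $w_{v_1}$ and $w_{v_2}$ are related by iterated $\omega$-pumping, so $R$ cannot tell them apart. This is what makes both directions of the equivalence actually provable, and it is the content your proposal is missing.
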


The rest of this section is devoted to the proof of this lemma.
Let $L_1, L_2 \subseteq \Sigma^*$ be languages of deterministic $\C$-Parikh automata $(\A_1, S_1)$ and $(\A_2, S_2)$, respectively.
The proof comprises three steps:
\begin{enumerate}
	\item As the first step, we show that w.l.o.g.~we may assume $\A_1 = \A_2$.
	\item In the second step, we partition $\Sigma^*$ into finitely many regular languages $K_1, \ldots, K_m$
	and we reduce regular separability of $L_1$ and $L_2$
	to regular separability of $L_1 \mycap K_i$ and $L_2 \mycap K_i$ for every $i \in \set{1, \ldots, m}$.
	These subproblems turn out to be easier that the general one,
	due to the additional structural information encoded in the languages $K_i$'s.
	\item In the last step, we reduce separability of $L_1 \mycap K_i$ and $L_2 \mycap K_i$
	to unary{\sepsep}separability of vector sets in $\C$.
\end{enumerate}

\myparagraph{Step 1: Unifying the underlying automaton}
As the input languages are subsets of regular languages recognised by their underlying finite automata,
$L_1 = L(\A_1, S_1) \subseteq L(\A_1)$ and $L_2 = L(\A_2, S_2) \subseteq L(\A_2)$,
it is enough to consider separability of $L_1$ and $L_2$ \emph{inside} the intersection of $L(\A_1)$ and $L(\A_2)$:

\begin{proposition}\label{prop:intersection}
	The languages $L_1$ and $L_2$ are regular{\sepsep}separable if, and only if,
	the languages $L_1 \mycap L(\A_2)$ and $L_2 \mycap L(\A_1)$
	are so.
\end{proposition}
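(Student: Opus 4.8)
The plan is to prove the equivalence by establishing that adding the intersection with the other automaton's underlying regular language does not change separability, exploiting the fact that $L_1$ already lives inside $L(\A_1)$ and $L_2$ inside $L(\A_2)$.

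For the ``only if'' direction, suppose a regular language $R$ separates $L_1$ and $L_2$, so $L_1 \subseteq R$ and $L_2 \mycap R = \emptyset$. I claim the same $R$ separates $L_1 \mycap L(\A_2)$ and $L_2 \mycap L(\A_1)$. The inclusion $L_1 \mycap L(\A_2) \subseteq L_1 \subseteq R$ is immediate by monotonicity of intersection, and the disjointness $(L_2 \mycap L(\A_1)) \mycap R \subseteq L_2 \mycap R = \emptyset$ follows likewise. Thus no new separator is needed in this direction, and $R$ itself works verbatim.

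The ``if'' direction is the step requiring a genuine idea. Suppose $R$ separates $L_1 \mycap L(\A_2)$ from $L_2 \mycap L(\A_1)$; I want to produce a regular separator for $L_1$ and $L_2$. The key observation is that $L_1 \subseteq L(\A_1)$, so $L_1 \mycap L(\A_2) = L_1 \mycap (L(\A_1) \mycap L(\A_2))$, and symmetrically for $L_2$. The natural candidate separator is $R' = (R \mycap L(\A_1)) \cup (\Sigma^* \setminus L(\A_2))$, which is regular since regular languages are effectively closed under intersection, union, and complement. I would verify that $L_1 \subseteq R'$ by splitting a word $w \in L_1 \subseteq L(\A_1)$ according to whether $w \in L(\A_2)$: if $w \notin L(\A_2)$ then $w$ lies in the second disjunct, and if $w \in L(\A_2)$ then $w \in L_1 \mycap L(\A_2) \subseteq R$ and $w \in L(\A_1)$, so $w$ lies in the first disjunct. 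For disjointness, I would take $w \in L_2 \subseteq L(\A_2)$ and check that $w \notin R'$: the second disjunct is avoided since $w \in L(\A_2)$, and the first disjunct is avoided because $w \in L(\A_2)$ together with $w \in L_2 \subseteq L(\A_1)$ gives $w \in L_2 \mycap L(\A_1)$, which is disjoint from $R$.

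The main obstacle, and the part I expect to take the most care, is the asymmetric choice of how to close off the separator in the ``if'' direction: one must add all words outside $L(\A_2)$ to capture the part of $L_1$ that the restricted separator $R$ never sees, while simultaneously intersecting with $L(\A_1)$ to avoid accidentally capturing words of $L_2$ that fall outside the restricted domain. Getting the two closure operations on the correct automaton—complement of $L(\A_2)$ but intersection with $L(\A_1)$—is the crux; swapping them would break either the inclusion or the disjointness. Everything else is routine Boolean manipulation relying only on effective closure properties of regular languages.
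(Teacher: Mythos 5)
Your proof is correct and follows the same overall strategy as the paper: the ``only if'' direction uses the given separator verbatim, and the ``if'' direction closes off the restricted separator with the complement of $L(\A_2)$. The one substantive difference is your extra intersection with $L(\A_1)$: the paper takes $S' = S \cup \overline{L(\A_2)}$, whereas you take $R' = (R \mycap L(\A_1)) \cup \overline{L(\A_2)}$. This difference is not cosmetic --- your version is the watertight one. The paper's separator can fail for an adversarial choice of $S$: nothing prevents $S$ from containing a word $w \in L_2 \setminus L(\A_1)$, since the disjointness hypothesis only constrains $S$ on $L_2 \mycap L(\A_1)$, and such a $w$ then lies in $L_2 \mycap S'$. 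Concretely, with $L(\A_1) = \{a\}$, $L(\A_2) = \{a, aa\}$, $L_1 = \{a\}$, $L_2 = \{aa\}$, the language $S = \{a, aa\}$ separates $L_1 \mycap L(\A_2) = \{a\}$ from $L_2 \mycap L(\A_1) = \emptyset$, yet $S \cup \overline{L(\A_2)} = \Sigma^*$ meets $L_2$. The paper's construction is repaired exactly by first replacing $S$ with $S \mycap L(\A_1)$, which is your first disjunct; so your insistence that the crux is ``complement of $L(\A_2)$ but intersection with $L(\A_1)$'' is well placed.

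One slip in your write-up: in the disjointness check you assert $w \in L_2 \subseteq L(\A_1)$, but $L_2$ is a subset of $L(\A_2)$, not of $L(\A_1)$. The correct reasoning is a case split: if $w \notin L(\A_1)$, the first disjunct $R \mycap L(\A_1)$ is avoided outright; if $w \in L(\A_1)$, then $w \in L_2 \mycap L(\A_1)$, which is disjoint from $R$, so again $w \notin R \mycap L(\A_1)$. Since the intersection with $L(\A_1)$ is already built into your separator, this slip does not affect the validity of the construction, only the justification as phrased.
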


\begin{proof}
The ``only if'' direction is trivial as every language separating $L_1$ and $L_2$ separates
$L_1 \mycap L(\A_2)$ and $L_2 \mycap L(\A_1)$ as well.
For the opposite direction, we observe that if a regular language $S$ separates $L_1 \mycap L(\A_2)$ and $L_2 \mycap L(\A_1)$,
then $S' = S \cup \overline{L(\A_2)}$ is a regular language separating $L_1$ and $L_2$.
\qed
\end{proof}

Let $\A$ be the product automaton of $\A_1$ and $\A_2$, and thus $L (\A) = L(\A_1) \mycap L(\A_2)$. 
It is deterministic since both $\A_1$ and $\A_2$ are so.
%
%
We claim that one can compute sets $U_1, U_2 \in \C$ such
that $L_1 \mycap L(\A_2) = L(\A, U_1)$ and $L_2 \mycap L(\A_1) = L(\A, U_2)$.
The set $T$ of transitions of $\A$ is a subset of the product $T_1 \times T_2$ of transitions of $\A_1$ and $\A_2$, and thus there are obvious projections functions
$\pi_1 : T\to T_1$ and $\pi_2 : T \to T_2$.
If we enumerate the transition sets,
say $T_1 = \{t_1^1, \ldots, t_1^m\}$,
$T_2 = \{t_2^1, \ldots, t_2^n\}$,
and $T = \{t_1, \ldots, t_\ell\}$ with $\ell \leq m \cdot n$,
we obtain
$\pi_1 : \set{1, \ldots, \ell} \to \set{1, \ldots, m}$ and $\pi_2 : \set{1, \ldots, \ell} \to \set{1, \ldots, n}$.
We use these projections to define two linear (and in particular, affine) functions
$\psi_1: \N^\ell \to \N^m$ and $\psi_2: \N^\ell \to \N^n$ 
which instead of counting transitions in $T$, count the corresponding transitions in $T_1$ or in $T_2$, respectively;
formally,
\[
  \psi_1(v)[j] = \sum_{i: \pi_1(i) = j} v[i] \qquad
  \psi_2(v)[j] = \sum_{i: \pi_2(i) = j} v[i].
\]
Finally, we set $U_1 := \psi_1^{-1}(S_1)$ and $U_2 := \psi_2^{-1}(S_2)$.
Intuitively, $U_1$ and $U_2$ are as $S_1$ and $S_2$, except that instead of single transitions of $\A_1$ or $\A_2$ they are seeing pairs of transitions,
and simply ignore one of them.
Since $\C$ is closed under inverse images of affine mappings by assumption,
$U_1, U_2 \in \C$.
For the rest of the proof we may thus assume that the input automata are $(\A, U_1)$ and $(\A, U_2)$.


\myparagraph{Step 2: Regular partition using skeletons}
We now define a partition of $\Sigma^*$ into finitely many parts, such that words belonging to the same part
behave similarly with respect to automaton $\A$.

We use the notion of \emph{skeleton} of a run, defined already in~\cite{DBLP:journals/ijfcs/CadilhacFM13},
where it was used to solve the regularity problem of \emph{unambiguous} Parikh automata.
Consider a run $\rho = t_1 \cdots t_k \in T^*$. The idea of skeleton is to traverse $\rho$ from left to right and remove loops,
but only if such removal does not decrease the set of states visited so far.
%
Formally, the skeleton is a function from runs to runs defined by induction. We set $\skel(\varepsilon) = \varepsilon$.
For the induction step, suppose that $\skel(t_1 \ldots t_{k-1}) = u_1 \ldots u_\ell \in T^*$ is already defined, and
let $q$ be the ending state of the new transition $t_k$. If $q$ does not appear in the run $u_1 \ldots u_\ell$, we put
$\skel(t_1 \ldots t_k) = u_1 \ldots u_\ell t_k$. Otherwise, let $u_m$, for $m < \ell$, be the last transition that ends in state $q$.
If all states visited by $u_{m+1} \ldots u_\ell$ are also visited by $u_1 \ldots u_m$, we put 
$\skel(t_1 \ldots t_k) = u_1 \ldots u_m$ thus removing the loop; otherwise, we put $\skel(t_1 \ldots t_k) = u_1 \ldots u_\ell t_k$.


The so defined skeleton $\skel(\rho)$ of a run $\rho$ has two properties:
1) $\skel(\rho)$ visits the same states as $\rho$, 2) the length of $\skel(\rho)$ is at most $n^2$, where $n$ is the number
of states in the automaton $\A$. The first point is clear by definition.
In order to see the second point,
assume towards a contradiction that the length of the skeleton is longer than $n^2$.
By the pigeonhole principle, some state is thus visited more than $n$ times, so there are at least $n$
loops in between two consecutive occurrences of this state in the skeleton. Therefore it is impossible that each loop contains
some new state not present in all the previous loops, and thus
one of these loops should be removed during the process of creating the skeleton, a contradiction.

We abusively call a run $\rho$ a \emph{skeleton} if $\skel(\rho) = \rho$.
Because of the bound $n^2$ on the length of a skeleton,
if $d$ is the total number of transitions of $\A$,
then there are at most $d^{n^2}$ skeleton runs.
Let $\rho_1, \ldots, \rho_m$ be all the skeletons, with $m \leq k^{n^2}$.
We define $K_i$ to be the set of all words $w$ having an accepting run $\rho$
in automaton $\A$ with $\skel(\rho) = \rho_i$.
Since $\A$ is deterministic we know that $K_i \mycap K_j = \emptyset$ for $i \neq j$. 
Therefore $K_1, \ldots, K_m$ and $K_{m+1} = \Sigma^* \setminus (\bigcup_{1 \leq i \leq m} K_i)$
form a partition of $\Sigma^*$.
%
All languages $K_i$ are necessarily regular, since the skeleton can be computed by a finite automaton.

We state the following lemma, which can be seen
as generalization of Proposition~\ref{prop:intersection}.

\begin{lemma}
Let $K_1, \ldots, K_k$ be regular languages forming a partition of $\Sigma^*$.
Two languages $L_1, L_2 \subseteq \Sigma^*$ are regular{\sepsep}separable if, and only if,
$L_1 \mycap K_i$ and $L_2 \mycap K_i$ are regular separable for all $i \in \{1, \ldots, k\}$.
\end{lemma}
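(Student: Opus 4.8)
The plan is to prove the two directions separately, with the forward direction being immediate and the backward direction resting on a single gluing construction.

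For the ``only if'' direction, I would observe that any regular language $R$ separating $L_1$ and $L_2$ already separates $L_1 \mycap K_i$ and $L_2 \mycap K_i$ for every $i$: the inclusion $L_1 \mycap K_i \subseteq L_1 \subseteq R$ is immediate, and $(L_2 \mycap K_i) \cap R \subseteq L_2 \cap R = \emptyset$. No new separator needs to be built, so this direction is essentially trivial and mirrors the ``only if'' part of Proposition~\ref{prop:intersection}.

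The substantive direction is ``if''. Assume that for each $i \in \set{1, \ldots, k}$ there is a regular language $R_i$ with $L_1 \mycap K_i \subseteq R_i$ and $(L_2 \mycap K_i) \cap R_i = \emptyset$. The crux of the argument is to glue these local separators together along the partition by setting
\[
  R \;=\; \bigcup_{i=1}^{k} (R_i \mycap K_i).
\]
Since every $R_i$ and every $K_i$ is regular, and regular languages are closed under finite intersection and finite union, $R$ is regular.

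It then remains to verify that $R$ separates $L_1$ and $L_2$, which is exactly where the partition hypothesis enters. For the inclusion $L_1 \subseteq R$: given $w \in L_1$, since the $K_i$ cover $\Sigma^*$ there is some $i$ with $w \in K_i$, whence $w \in L_1 \mycap K_i \subseteq R_i$ and therefore $w \in R_i \mycap K_i \subseteq R$. For the disjointness $L_2 \cap R = \emptyset$: suppose towards a contradiction that $w \in L_2 \cap R$; then $w \in R_j \mycap K_j$ for some $j$, so in particular $w \in L_2 \mycap K_j$ and simultaneously $w \in R_j$, contradicting $(L_2 \mycap K_j) \cap R_j = \emptyset$. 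Hence $R$ is the desired separator. I do not anticipate a genuine obstacle: the only point to keep straight is that the \emph{covering} of $\Sigma^*$ by the $K_i$ is what forces $L_1 \subseteq R$, while the local separating property of each $R_j$ read through $K_j$ is what forces disjointness from $L_2$; notably, pairwise disjointness of the parts is not even required for the argument to go through.
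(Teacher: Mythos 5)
Your proof is correct and uses exactly the same construction as the paper: the ``only if'' direction is immediate, and the ``if'' direction glues the local separators into $\bigcup_i (R_i \cap K_i)$, which is precisely the paper's separator $S = \bigcup_i (S_i \cap K_i)$. Your additional remark that only the covering property (not pairwise disjointness) of the $K_i$ is needed is accurate, but the argument itself is the paper's, merely spelled out in more detail.
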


\begin{proof}
The ``only if'' direction is trivial, since every language separating $L_1$ and $L_2$ separates
$L_1 \mycap K_i$ and $L_2 \mycap K_i$ as well.
For the opposite direction, we observe that if for every $i$ the languages $L_1 \mycap K_i$ and $L_2 \mycap K_i$
are separable by a regular language $S_i$, then $L_1$ and $L_2$ are separable by the regular language
$S = \bigcup_{1 \leq i \leq k} (S_i \mycap K_i)$.
\qed
\end{proof}


Therefore, it only remains to decide regular separability for the languages
$L(\A, U_1) \mycap K_i$ and $L(\A, U_2) \mycap K_i$.

\myparagraph{Step 3: Reduction to unary{\sepsep}separability in $\C$}
Fix a skeleton $\rho_i$.
Let $c_1, \ldots, c_m$ be all the simple cycles in the automaton $\A$ which visit
only states visited by $\rho_i$. 
Since a cycle cannot visit the same state twice (except the initial state),
it has length at most $n$,
and thus the number of simple cycles is $m \leq d^n$,
where $d$ is the number of transitions of the automaton.
Notice that any run $\rho$ with $\skel(\rho) = \rho_i$ decomposes into the skeleton $\rho_i$
and a bunch of simple cycles from $\{c_1, \ldots, c_m\}$.
Let $T = \set{t_1, \ldots, t_d}$, thus $\rho \in T^*$.
Let $\mu: \N^m \to \N^d$ be the affine function that transforms counting cycles into counting transitions,
which is defined as
\begin{align*}
	\mu(x_1, \ldots, x_m) = \parim(\rho_i) + \sum_{1 \leq i \leq m} \parim(c_i) \cdot x_i.
\end{align*}
%
%
(Notice that the function above is affine, and not linear,
since it requires to take into account the initial cost of the skeleton $\parim(\rho_i)$.)
In other words,
$\mu(x_1, \ldots, x_m)$ returns Parikh image of a run which decomposes into the skeleton $\rho_i$ and
$x_i$ cycles $c_i$, for every $i$.
Let $V_1 = \mu^{-1}(U_1)$ and $V_2 = \mu^{-1}(U_2)$ be the corresponding sets counting cycles instead of transitions.
Since $\C$ is closed under the inverse image of affine mappings,
$V_1, V_2 \in \C$.

\begin{lemma}  \label{l:tounary}
The following two conditions are equivalent:
	\begin{enumerate}
	  \item The two languages $L(\A, U_1) \mycap K_i, L(\A, U_2) \mycap K_i \subseteq \Sigma^*$ are regular{\sepsep}separable.
	  \item The two sets of vectors $V_1, V_2 \subseteq \N^m$ are unary{\sepsep}separable.
	\end{enumerate}
\end{lemma}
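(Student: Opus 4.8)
The plan is to establish the equivalence between regular separability of the two languages (restricted to the skeleton class $K_i$) and unary separability of the two vector sets $V_1, V_2$ by exploiting the tight correspondence set up by the affine map $\mu$. The key structural fact is that, since $\A$ is deterministic, every word $w \in K_i$ determines a unique run $\rho$ with $\skel(\rho) = \rho_i$, and this run decomposes uniquely into the skeleton plus a multiset of simple cycles $c_1, \ldots, c_m$. Thus the ``cycle count'' vector $x \in \N^m$ is a faithful invariant of $w$, and $\parim(\rho) = \mu(x)$. In particular, $w \in L(\A, U_j) \cap K_i$ if and only if $\mu(x) \in U_j$, i.e.\ if and only if $x \in \mu^{-1}(U_j) = V_j$.

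\smallskip

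\emph{The $(2)\Rightarrow(1)$ direction} I would do first, as it is the cleaner one. Suppose a unary set $X \subseteq \N^m$ separates $V_1$ and $V_2$. Recall that membership in a unary set depends only on counting each coordinate up to a threshold $n$ and modulo $n$ beyond it. I would build a finite automaton that, reading a word $w \in K_i$, tracks for each simple cycle $c_j$ its number of occurrences, capped at threshold $n$ and taken modulo $n$ above the threshold; this is finite-state information, so the resulting separator is regular. Because the cycle-count vector $x$ of $w$ satisfies $x \in X \iff (\text{the tracked }\equiv_n\text{-class lies in }X)$, the regular language recognising ``$x \in X$'' (intersected with $K_i$) contains $L(\A,U_1)\cap K_i$ and is disjoint from $L(\A,U_2)\cap K_i$. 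The only subtlety is verifying that an NFA can recover the $\equiv_n$-class of the cycle-count vector from $w$; this follows because $\A$ is deterministic and the skeleton/cycle decomposition is computable on the fly.

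\smallskip

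\emph{For the $(1)\Rightarrow(2)$ direction}, which I expect to be the main obstacle, I would argue contrapositively: if $V_1$ and $V_2$ are \emph{not} unary separable, then $L(\A,U_1)\cap K_i$ and $L(\A,U_2)\cap K_i$ are not regular separable. The natural approach is a pumping/Myhill--Nerode argument. Any regular separator $R$ has finitely many states, so it induces an equivalence of bounded index on words, hence (via the cycle decomposition) a ``coarsening'' that can only distinguish cycle-count vectors up to some $\equiv_n$ modulus tied to the size of $R$. If $R$ separated the two languages, the induced unary-style invariant on cycle counts would separate $V_1$ from $V_2$, contradicting non-separability. The delicate point is showing that the equivalence induced by $R$ on $K_i$-words refines to (is subsumed by) a genuine $\equiv_n$-relation on the cycle-count vectors—i.e.\ that a regular language's behaviour on $K_i$ is insensitive to cycle counts beyond counting up to a threshold and modulo a period. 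I anticipate this requires a careful pumping argument: pumping a simple cycle $c_j$ in a word of $K_i$ changes $x$ by a unit in coordinate $j$ while keeping membership in $K_i$ (the skeleton is preserved precisely because the cycle visits only states already visited by $\rho_i$), and the pigeonhole principle on $R$'s states along repeated pumpings yields both the threshold and the period. This step—translating a finite-automaton separator into a unary separator of vector sets—is where the bulk of the work lies, and it is exactly here that the definition of unary sets (count up to $n$, modulo $n$ above) is tailored to match what regular languages can observe.
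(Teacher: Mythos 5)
Your high-level strategy is the same as the paper's — direction $(2)\Rightarrow(1)$ via a finite automaton tracking cycle counts up to a threshold and modulo a period, direction $(1)\Rightarrow(2)$ by converting a regular separator into a unary one through pumping of cycles — but two of your supporting claims are false as stated. The lesser problem is your premise that the decomposition of a run with skeleton $\rho_i$ into $\rho_i$ plus simple cycles is unique, so that the cycle-count vector $x$ is ``a faithful invariant of $w$''. It is not: decompositions are in general not unique, and no deterministic on-the-fly procedure recovers one whose residue is exactly $\rho_i$ (the natural greedy procedure leaves a simple path as residue, which need not equal $\rho_i$, since skeletons may repeat states). The paper's automaton instead \emph{guesses} nondeterministically, at each letter, whether the current transition is charged to the skeleton or to one of the cycles $c_j$; this suffices because every valid decomposition $x$ of the same run $\rho$ satisfies $\mu(x)=\parim(\rho)$, hence all decompositions agree on membership in $V_1$ and in $V_2$. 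Your $(2)\Rightarrow(1)$ argument survives once restated in this way.

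The serious gap is in $(1)\Rightarrow(2)$: your claim that pumping a simple cycle into a word of $K_i$ keeps it in $K_i$ ``precisely because the cycle visits only states already visited by $\rho_i$'' is wrong, because skeletonization scans left to right and removes a loop only when \emph{all} its states have been visited earlier, i.e.\ to the left of the loop; hence whether an inserted cycle disappears under $\skel$ depends on \emph{where} it is inserted, not just on which states it visits. Concretely, let $\rho_i \colon s \to a \to b \to a \to c$ (a skeleton, since the loop at $a$ visits the new state $b$) and let $c_1 \colon a \to c \to a$ be a simple cycle. Inserting $c_1$ at the \emph{first} occurrence of $a$ gives the run $s \to a \to c \to a \to b \to a \to c$, whose skeleton is the entire run (at no point does a loop have all its states to its left), so the word leaves $K_i$; inserting $c_1$ at the \emph{second} occurrence of $a$ does stay in $K_i$. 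This is exactly where the paper invests its technical work: it builds, for \emph{every} vector $v\in\N^m$, a canonical word $w_v \in K_i$ by pasting all $v[j]$ copies of each cycle $c_j$ at one designated position of the skeleton (an occurrence of its ``fixing state''), so that $v \in V_j$ implies $w_v \in L(\A,U_j) \mycap K_i$; then, taking $\omega$ to be the idempotent power of the syntactic monoid of the separator $R$ and applying $x y^\omega z \in R \iff x y^{2\omega} z \in R$ repeatedly to canonical words, it shows that no $v_1 \in V_1$ and $v_2 \in V_2$ can satisfy $v_1 \equiv_\omega v_2$, whence $\{v \mid \exists v_1 \in V_1,\ v \equiv_\omega v_1\}$ is a unary separator. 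Without such a canonical-word construction — realizing arbitrary vectors as words of $K_i$ with controlled insertion points — your pumping/pigeonhole sketch cannot be completed.
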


\begin{proof}
	For the implication $1) \Rightarrow 2)$, suppose $R$ is a regular language separating $L(\A, U_1) \mycap K_i$ and $L(\A, U_2) \mycap K_i$.
	Fix $\omega \in \N$ such that for all words $x, y, z \in \Sigma^*$,
	\begin{equation}\label{eq:omega-power}
	x y^\omega z \in R \iff x y^{2\omega} z \in R.
	\end{equation}
	It is easy to see that for every regular language $R$ such $\omega$ exists. The simplest way of showing this
	is to consider the syntactic monoid $M$ of $R$ and to let $\omega$ be its idempotent power, i.e., a number
	such that $m^\omega = (m^\omega)^2$ for every $m \in M$.

	Recall $n$-unary equivalence: $u \equiv_n v$
	if for every coordinate $1 \leq i \leq m$ we have $u[i] \equiv v[i] \mod n$ and moreover $u[i] \leq n \iff v[i] \leq n$.
	It is enough to show that for all $v_1 \in V_1, v_2 \in V_2$ it holds $v_1 \not\equiv_\omega v_2$.
	Indeed, if this is the case, the unary set $S = \{v \in \N^m \mid \exists_{v_1 \in V_1} v \equiv_\omega v_1\}$
	separates $V_1$ and $V_2$. 

	Suppose, towards a contradiction, that there are some $v_1 \in V_1, v_2 \in V_2$ such that $v_1 \equiv_\omega v_2$.
	Recall that $c_1, \ldots, c_m$ are all the simple cycles in automaton $\A$ visiting
	only states visited by the skeleton $\rho_i$. For every cycle $c_j$, let's arbitrarily choose a state on it, and let's call it the \emph{fixing state} of $c_j$.
	Let $w_1, \ldots, w_m$ be words labeling the cycles $c_1, \ldots, c_m$, resp., when reading from its fixing state,
	and let $w$ be the word labeling skeleton $\rho_i$.
	Consider a partition $w = s_0 \ldots s_k$ and let $q_d$, for $d \in \{0, \ldots, k-1\}$, be the state,
	which is reached in $\A$ after reading $s_0 \ldots s_d$.
	This partition of $w$ is chosen such that among $q_d$ are all fixing states of cycles $c_1, \ldots, c_m$, every one exactly ones.
	For every $v \in \N^m$ we define a \emph{canonical word} $w_v$ for $v$ as
	the word obtained from pasting into $w$, in places between some $s_d$ and $s_{d+1}$, words $w_1^{v[1]}, \ldots, w_m^{v[m]}$
	in such a way that every $w_j^{v[j]}$ is pasted into the place where its fixing state equals $q_d$ and words
	pasted into the same place are sorted according to indices of the corresponding cycles.

	Notice an important fact: if $v \in V_1$ then 
	 $w_v \in L(\A, U_1) \mycap K_i$, and likewise for $V_2$. Consider words $w_{v_1}$ and $w_{v_2}$.
	One can see that by repeated application of equation~\eqref{eq:omega-power} we can obtain
	that $w_{v_1} \in R \iff w_{v_2} \in R$. But $R$ was supposed to separate 
	$L(\A, U_1) \mycap K_i$ and $L(\A, U_2) \mycap K_i$, a contradiction.

	For proving the implication $2) \Rightarrow 1)$, suppose that a unary set $S$ separates $V_1$ and $V_2$.
	We claim that the language $R = L(\A, \mu(S)) \mycap K_i$ is regular and separates
	$L(\A, U_1) \mycap K_i$ and $L(\A, U_2) \mycap K_i$. 

	We first verify that $R$ separates the languages. Clearly, $U_1 \subseteq \mu(V_1) \subseteq \mu(S)$,
	so $L(\A, U_1) \mycap K_i \subseteq L(\A, \mu(S)) \mycap K_i = R$. The disjointness of $L(\A, U_2) \mycap K_i$ and $R$
	is shown by contradiction. 
	Suppose that there is a word $w\in K_i$ belonging both to $L(\A, \mu(S))$ and to $L(\A, U_2)$, let $\rho$ we run of $\A$
	over $w$ and let $v = \parim(\rho)$.
	We have $v \in \mu(S) \mycap U_2$, which implies $v = \mu(s)$ for some $s \in S \mycap \mu^{-1}(U_2) = S \mycap V_2$.
	In consequence $S \mycap V_2$ is nonempty, thus contradicting  the assumption that $S$ separates $V_1$ and $V_2$.

	In order to prove that $R$ is regular it suffices to prove that $L(\A, \mu(S))$ is regular.  
	The finite nondeterministic automaton recognizing this language simulates a run $\rho = t_{i_1} \ldots t_{i_\ell}$ of $\A$, and accepts when $\parim(\rho) \in \mu(S)$.
	Since $S$ is unary, the automaton can evaluate this condition using finite memory.
	For every cycle $c_j$, the automaton would store a vector $x_j < \parim(c_j)$, and a number $n_j$
	up to the unary equivalence $\equiv_n$, with the following meaning:
	the vector $\parim(c_i)$ has been already executed $n_j$ times, and $x_i$ is the current ``remainder''.
	Additionally, the automaton stores a vector $x \leq \parim(\rho_i)$ which is counting those transitions on the skeleton
	which have not been counted as cycles.
	At every input letter the automaton guesses nondeterministically one of cycles $c_i$ or the skeleton
	and updates $x_j$, $n_j$ and $x$ accordingly.
	The automaton accepts when $x = \parim(\rho_i)$, $x_j = 0$ for all $j$, and $(n_1, \ldots, n_m)\in S$.
	%
	%
	\qed
\end{proof}
%



%



\section{Applications}  \label{sec:dec}

We now derive two direct corollaries of Theorem~\ref{thm:reg-sep}. 
In this section by a \emph{projection} we mean a function $\pi_{k, I} : \N^k \to \N^{|I|}$, for $I \subseteq \set{1\ldots k}$, that drops coordinates not in $I$.
%
%
We start with a simple but useful lemma:
\begin{lemma}\label{lemma:affine}
	If a class $\C \subseteq \bigcup_{d \in \N} \powset{\N^d}$ contains all semilinear sets and  is effectively closed under intersections, 
	projections, and inverse images of projections, then it is effectively closed under inverse images of affine maps.
\end{lemma}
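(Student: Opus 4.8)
The plan is to express the inverse image $f^{-1}(S)$ of an affine map $f(v) = Mv + u$ as a composition of the four operations that $\C$ is assumed to be closed under, using the crucial fact that the \emph{graph} of $f$ is semilinear and hence already lies in $\C$.

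First I would work in the product space $\N^{k+\ell}$, whose coordinates split into a block $v \in \N^k$ (the first $k$ coordinates) and a block $w \in \N^\ell$ (the last $\ell$ coordinates). The graph of $f$,
\[
G_f = \setof{(v, w) \in \N^{k+\ell}}{w = Mv + u},
\]
is the solution set of a finite conjunction of linear Diophantine equations over $\N$, hence semilinear; by hypothesis $G_f \in \C$, and it can be computed from $M$ and $u$.

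Given an input set $S \in \C$ with $S \subseteq \N^\ell$, I would then proceed in three closure steps. First, taking the inverse image of $S$ under the projection $\N^{k+\ell} \to \N^\ell$ onto the $w$-block yields $\N^k \times S \in \C$, by closure under inverse images of projections. Second, intersecting this with $G_f$ produces $\setof{(v, f(v))}{f(v) \in S} \in \C$, by closure under intersections. Third, projecting onto the $v$-block gives exactly $\setof{v \in \N^k}{f(v) \in S} = f^{-1}(S) \in \C$, by closure under projections. Composing the three steps shows $f^{-1}(S) \in \C$, and since each closure operation is effective and $G_f$ is computable, the whole construction is effective.

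There is no serious obstacle here: the one point that carries the argument is the observation that the graph of an affine (indeed, of any Presburger-definable) map is semilinear, so that it is available inside $\C$ as a ``gadget'' to intersect with. Everything else is routine bookkeeping, namely checking that the composition of inverse projection, intersection with the graph, and projection computes the inverse image; I would confirm this by chasing a pair $(v, w)$ through the three stages.
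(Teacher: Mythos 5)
Your proposal is correct and follows essentially the same route as the paper's own proof: both realize $f^{-1}(S)$ as $\pi_{J}\bigl(G_f \cap \pi_{I}^{-1}(S)\bigr)$, where $G_f$ is the graph of $f$ viewed as a semilinear (hence $\C$-) subset of $\N^{k+\ell}$. The only cosmetic difference is that the paper exhibits the graph explicitly as a single linear set with base $(0^k,u)$ and periods $(e_j, m_j)$, whereas you appeal to the general semilinearity of solution sets of linear Diophantine systems; both yield the required effectiveness.
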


\begin{proof}
	Let $S$ be a set in $\C$ and $f:\N^k\to \N^\ell$ be an affine map 
	defined by $f(u)= Mu+v$ for $M=(m_{i,j})$ a matrix of dimension $\ell \times k$ and $v$ a vector of dimension $\ell$.
	Let $e_j \in \N^k$ be the vector s.t. $e_j[j] = 1$ and $0$ otherwise,
	and let $m_j = (m_{1,j},m_{2,j},\ldots,m_{\ell,j})$ be the (transpose of) the $j$-th column of $M$.
	First remark that the set 
	\[
		E_1 = \{(x,f(x)) \mid x\in\N^k\} \subseteq \N^{k+\ell}
	\]
	is linear with base $(0^k,v)$
	and periods $\set{p_1, \dots, p_k}$,
	where $p_j= (e_j, m_j)\in\N^{k+\ell}$.
	Thus, $E_1 \in \C$.
	Therefore the set $E_2 = E_1 \cap \pi_{k+\ell,I}^{-1}(S)$ is also in $\C$, for  
	$I=\{k+1, \ldots, k+\ell\}$.
	Finally, we conclude since $\pi_{k+\ell, J}(E_2)=f^{-1}(S)$ with $J=\{1, \ldots, k\}$.
	\qed
\end{proof}


\begin{corollary}\label{corr:pa-decidable}
The regular separability problem is decidable for nondeterministic Parikh automata.
\end{corollary}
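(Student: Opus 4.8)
The plan is to instantiate Theorem~\ref{thm:reg-sep} with $\C$ being the class of semilinear sets, since nondeterministic Parikh automata are by definition precisely the $\C$-Parikh automata for this choice of $\C$. It therefore suffices to verify that the semilinear sets form a \emph{robust} class, i.e., that they are effectively closed under inverse images of affine functions and that unary{\sepsep}separability is decidable for them. Once both conditions are in place, the corollary follows immediately.

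For the first condition, I would appeal to Lemma~\ref{lemma:affine}. Its hypotheses require that $\C$ contain all semilinear sets (trivially true here) and be effectively closed under intersections, projections, and inverse images of projections. All three are classical closure properties of semilinear sets: semilinear sets are effectively closed under the Boolean operations, and in particular under intersection; the projection of a semilinear set is again semilinear, with an effective construction; and the inverse image of a semilinear set under a projection is defined by the same constraints on the retained coordinates together with unconstrained coordinates, hence is semilinear and effectively computable. Lemma~\ref{lemma:affine} then yields effective closure of the semilinear sets under inverse images of affine maps.

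For the second condition, I would invoke the known decidability of unary{\sepsep}separability for semilinear sets~\cite{DBLP:journals/ipl/ChoffrutG06}, as recalled in the introduction. With both robustness conditions established, Theorem~\ref{thm:reg-sep} applies directly and delivers decidability of regular{\sepsep}separability for semilinear-Parikh automata, that is, for nondeterministic Parikh automata.

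As this is a corollary, the substantive work is already carried out in Theorem~\ref{thm:reg-sep} and in the cited decidability result, so no genuine obstacle remains; the only point that calls for care is confirming the \emph{effectiveness} of the three closure properties feeding into Lemma~\ref{lemma:affine}. These are all standard and well documented for semilinear sets, and I expect this verification to be routine rather than a source of difficulty.
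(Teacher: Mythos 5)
Your proposal is correct and follows essentially the same route as the paper: instantiate Theorem~\ref{thm:reg-sep} with $\C$ the semilinear sets, obtain closure under inverse images of affine maps from Lemma~\ref{lemma:affine} via the classical effective closure properties of semilinear sets, and get decidability of unary{\sepsep}separability from~\cite{DBLP:journals/ipl/ChoffrutG06}. The only detail the paper adds is how to extract the statement for semilinear sets from that reference (the cited result concerns separability of rational relations in $\Sigma^* \times \N^m$ by recognizable relations; dropping the $\Sigma^*$ component identifies rational and recognizable relations in $\N^m$ with semilinear and unary sets, respectively), which is a presentational rather than mathematical difference.
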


\begin{proof}
In order to apply Theorem~\ref{thm:reg-sep} for $\C$ being semilinear sets, we need to know that the class of semilinear sets is robust.
First, Lemma~\ref{lemma:affine} yields effective closure under inverse images of affine maps,
as semilinear sets are effectively closed under boolean combinations, images, and inverse images of projections.
Second, decidability of the unary{\sepsep}separability problem for semilinear sets is a corollary
of the main result in~\cite{DBLP:journals/ipl/ChoffrutG06}.
This theorem states that separability of rational relations in $\Sigma^* \times \N^m$
by recognizable relations is decidable. If we ignore the $\Sigma^*$ component we get
the same result for rational and recognizable relations in $\N^m$, which are exactly semilinear sets and unary sets, respectively.
\qed
\end{proof}

For the second corollary we have to introduce vector addition systems (\vases)
and sections thereof.

A $d$-dimensional \emph{vector addition system} (\vas) is a pair $V = (s, T)$, where $s \in \N^d$
is a \emph{source} configuration and $T \subseteq_\fin \Z^d$ is a finite set of \emph{transitions}.
A \emph{run} of a \vas $V = (s, T)$ is a sequence 
$$(v_0, t_0, v_1), (v_1, t_1, v_2), \ldots, (v_{n-1}, t_{n-1}, v_n) \in \N^d \times T \times \N^d$$ 
such that for all $i \in \{0, \ldots, n-1\}$ we have $v_i + t_i = v_{i+1}$ and $v_0 = s$. The
\emph{target} of this run is the configuration $v_n$.
The \emph{reachability set} of a \vas $V$ is the set of targets of all its runs.

In order to ensure robustness, we slightly enlarge the family of \vas reachability sets to \emph{sections} thereof.
The intuition about a section is that we fix values on a subset of coordinates in vectors,
and collect all the values that can occur on the other coordinates.
%
For a subset $I \subseteq \{1, \ldots, d\}$,
the projection $\pi_{d, I}$ extends element-wise to sets of vectors $S \subseteq \N^d$, denoted $\pi_{d, I}(S)$.
For a vector $u \in \N^{d-|I|}$, the \emph{section} of $S$ w.r.t. $I$ and $u$ is the set
\begin{align*}
	\pi_{d, I}({\{ v \in S \mid \pi_{d, \{1, \ldots, d\} \setminus I}(v) = u \}}) \subseteq \N^{|I|}\enspace.
\end{align*}
We denote by $\Cvas$ the family of all sections of \vasreach sets.

\begin{corollary}\label{corr:pa-vas-decidable}
The regular{\sepsep}separability problem is decidable for nondeterministic $\Cvas$-Parikh automata.
\end{corollary}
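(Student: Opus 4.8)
The plan is to apply Theorem~\ref{thm:reg-sep} with $\C = \Cvas$, which reduces the task to verifying that $\Cvas$ is a \emph{robust} class of vector sets. By definition of robustness, I must establish two things: that $\Cvas$ is effectively closed under inverse images of affine maps, and that the unary{\sepsep}separability problem is decidable for $\Cvas$. The second ingredient is essentially imported: unary separability of \vas reachability sets was shown decidable in~\cite{DBLP:journals/corr/ClementeCLP16}, and I expect this to carry over to sections of such sets, since a section is itself captured by a suitable reachability set (one simply constrains the non-$I$ coordinates to fixed values and reads off the rest).

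For the closure under inverse images of affine maps, I would route everything through Lemma~\ref{lemma:affine}: it suffices to show that $\Cvas$ contains all semilinear sets and is effectively closed under intersections, projections, and inverse images of projections. The containment of semilinear sets holds because semilinear sets are exactly the reachability sets expressible by \vases augmented appropriately (or already arise as sections of \vas reachability sets), so $\Cvas$ subsumes them. Closure under projections is immediate from the very definition of a section, since sections are defined via projections $\pi_{d,I}$, and composing a section with a further projection yields another section. The remaining work is closure under intersections and under inverse images of projections, which I would argue by standard \vas product/padding constructions: to intersect two sections one forms a product \vas synchronising on the relevant coordinates, and inverse images of projections correspond to adding free ``unconstrained'' coordinates to the reachability set, which a \vas can simulate by adjoining transitions that pump those coordinates arbitrarily.

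The main obstacle I anticipate is the intersection closure. Reachability sets of \vases are not closed under intersection in general (intersecting two \vas reachability languages can leave the \vas world), so one must exploit the extra flexibility afforded by \emph{sections}: the section mechanism lets us tuck away auxiliary bookkeeping coordinates, fix them, and thereby realise an intersection as a section of a single larger \vas reachability set. Making this precise, and checking that the construction stays effective and lands back inside $\Cvas$ rather than a strictly larger class, is where the care is required. Once these closure properties are verified, Lemma~\ref{lemma:affine} delivers effective closure under inverse images of affine maps, both robustness conditions hold, and Theorem~\ref{thm:reg-sep} yields decidability of regular{\sepsep}separability for $\Cvas$-Parikh automata.
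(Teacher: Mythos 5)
Your overall plan coincides with the paper's: apply Theorem~\ref{thm:reg-sep} with $\C=\Cvas$, establish robustness via Lemma~\ref{lemma:affine}, import unary{\sepsep}separability from~\cite{DBLP:journals/corr/ClementeCLP16} (the paper cites its Theorem~9, which is stated for sections directly, so no extra ``carrying over'' argument is needed), obtain inverse images of projections by padding the \vas with coordinates it may increase arbitrarily, and obtain intersections by a product construction whose bookkeeping coordinates are hidden by the section mechanism (the paper simply cites Proposition~7 of the same reference for this). However, there is a genuine gap: your claim that closure under \emph{projections} is ``immediate from the very definition of a section, since composing a section with a further projection yields another section'' is false. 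A section \emph{fixes} the dropped coordinates to a prescribed vector $u$, whereas a projection \emph{existentially quantifies} them, i.e., takes a union over all their values. Concretely, if $S=\pi_{d,I}(\set{v\in R \mid \pi_{d,\set{1,\ldots,d}\setminus I}(v)=u})$ is a section of a reachability set $R$ and one projects $S$ onto $J\subseteq I$, the result is the union, over the infinitely many vectors $u''$ on the coordinates $I\setminus J$, of the sections of $R$ fixing $\set{1,\ldots,d}\setminus I$ to $u$ and $I\setminus J$ to $u''$; this is not a single section of $R$, so membership in $\Cvas$ does not follow from the definition alone.

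The missing idea is precisely the construction the paper gives: build a new \vas $V'$ from $V$ by adding, for each coordinate to be projected away, a transition decrementing that coordinate by $1$. In $V'$ any reachable value on such a coordinate can be driven down to $0$, so the projection of $S$ is realised as a section of the reachability set of $V'$ in which the projected-away coordinates are additionally \emph{fixed to~$0$}. This converts the existential quantification inherent in a projection back into the value-fixing that the definition of $\Cvas$ demands. A related, smaller slip: your justification for unary separability --- that ``a section is itself captured by a suitable reachability set'' --- is not right as stated, since fixing final coordinate values amounts to a zero-test at the end of the run, which \vases cannot perform; this inability is exactly why $\Cvas$ is introduced as a proper enlargement of the class of reachability sets. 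That point costs nothing here only because the cited Theorem~9 of~\cite{DBLP:journals/corr/ClementeCLP16} already covers sections.
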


\begin{proof}
We apply Theorem~\ref{thm:reg-sep} for $\C = \Cvas$; we thus need to show that class $\Cvas$ is robust.
Decidability of unary{\sepsep}separability of sets from $\Cvas$ is shown in
Theorem 9 in~\cite{DBLP:journals/corr/ClementeCLP16}.
Effective closure of $\C$ under inverse images of affine functions will follow
by Lemma~\ref{lemma:affine} once we prove all its assumptions.

First, $\Cvas$ contains all semilinear sets. Effective closure under intersections is shown in
Proposition 7 in~\cite{DBLP:journals/corr/ClementeCLP16}. Effective closure under inverse images of projections is easy:
extend the \vas with additional coordinates, and allow it to arbitrarily increase these coordinates.

Finally, to see that $\Cvas$ is effectively closed under projections consider a section $S \subseteq \N^d$ of the reachability set of a \vas $V$, and a
subset of coordinates $I \subseteq \{1, \ldots, d\}$.
We construct  a \vas $V'$ which is like $V$, but additionally allows to decrease every coordinate from $\{1, \ldots, d\} \setminus I$. 
Projection $\pi_{d, I}(S)$  of $S$ onto $I$ is a section of the reachability set of $V'$ defined similarly as $S$, but with an additional requirement that
all coordinates from $\{1, \ldots, d\} \setminus I$ have value 0.
\qed
\end{proof}

\bibliographystyle{plain}
\bibliography{citat}

\end{document}